%% file: main.tex
\documentclass{article}

\pdfoutput=1

\usepackage{arxiv}

\usepackage[utf8]{inputenc} 
\usepackage[T1]{fontenc}    
\usepackage{hyperref}       
\usepackage{url}            
\usepackage{booktabs}       
\usepackage{amsfonts}       
\usepackage{nicefrac}       
\usepackage{microtype}      
\usepackage{lipsum}		
\usepackage{graphicx}
\usepackage{natbib}
\usepackage{doi}

\usepackage{amsmath}
\usepackage{amssymb}
\usepackage{amsthm}

\newtheorem{definition}{Definition}
\newtheorem{theorem}{Theorem}
\newtheorem{corollary}{Corollary}
\newtheorem{proposition}{Proposition}
\newtheorem{remark}{Remark}
\newtheorem{lemma}{Lemma}

\newcommand{\triple}[3]{
  \expandafter\newcommand\csname #3\endcsname[3]{%
    \ensuremath{#1##1#2\,##2\,#1##3#2}%
  }
} 
\triple{\{}{\}}{hoare} 
\triple{[}{]}{incre} 
\triple{\langle}{\rangle}{lisbon} 
\triple{(}{)}{necpre} 

\newcommand{\psqhoare}{\hoare{P}{S}{Q} }
\newcommand{\psqincr}{\incre{P}{S}{Q} }
\newcommand{\psqlisbon}{\lisbon{P}{S}{Q} }
\newcommand{\psqnecpre}{\necpre{P}{S}{Q} }

\usepackage{silence}
\title{Combining Axiomatic Program Logics with Refinement Proofs}

\author{ \href{https://orcid.org/0000-0002-0734-7969}{\includegraphics[scale=0.06]{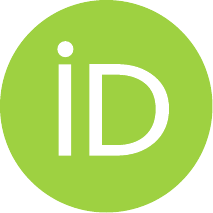}\hspace{1mm}S\"uha Orhun Mutluergil} \\
	Computer Science and Engineering Program\\
	Sabancı University\\
	İstanbul, Türkiye \\
	\texttt{suha.mutluergil@sabanciuniv.edu} \\
	\And
	\href{https://orcid.org/0009-0005-5316-7845}{\includegraphics[scale=0.06]{orcid.pdf}\hspace{1mm}Alperen Do\u{g}an} \\
	Computer Science and Engineering Program\\
	Sabancı University\\
	İstanbul, Türkiye \\
	\texttt{alperend@sabanciuniv.edu} \\
}

\hypersetup{
pdftitle={Combining Axiomatic Program Logics with Refinement Proofs},
pdfsubject={cs.LO},
pdfauthor={S\"uha Orhun Mutluergil, Alperen Do\u{g}an},
pdfkeywords={functional verification, relational verification, simulation relations},
}

\begin{document}
\maketitle

\begin{abstract}
    Refinement proofs verify an implementation by showing that its behaviours are subsumed by a simpler specification, on which functional properties are easier to establish. We study how such proofs interact with the axiomatic program logics used to verify the specification. We first give a uniform account of Hoare, Incorrectness, Lisbon, and Necessary Preconditions logic, classified by the direction in which each constrains a transition and by whether it over- or under-approximates its target set. We then show that simulation relations transfer state-based safety properties: a forward simulation carries a Hoare (inductive) invariant of the specification to one of the implementation, and forward and backward simulations both carry ordinary invariants, via the pre-image of the relation. Finally, we characterize, within these logics, when a relation is a simulation, forward simulations by the validity of Hoare or Lisbon triples, backward simulations by Necessary Preconditions or Incorrectness triples, so that the simulation obligation reduces to a triple in a functional logic. We illustrate the development with a concurrent counter, carrying a safety bound from an atomic sequential specification to a Left--Right implementation through an intermediate nondeterministic concurrent counter, with a forward simulation on one side and a backward simulation on the other.
\end{abstract}

\keywords{functional verification \and relational verification \and simulation relations}

\section{Introduction}

\input{arxivintroduction}

\section{Background}
\label{sec:background}

\input{arxivbackground}

\section{Preliminaries}
\label{sec:preliminaries}

\input{arxivpreliminaries}

\section{Axiomatic Models of Forward and Backward Simulations}
\label{sec:axiomatic-sim}

\input{arxivaxiomaticsim}

\section{Checking Validity of Forward and Backward Simulations}
\label{sec:validity-sim}

\input{arxivvaliditysim}

\section{Conclusions}

\input{arxivconclusions}

\clearpage

\bibliographystyle{unsrtnat}
\bibliography{references}

\clearpage

\begin{appendix}
    \input{arxivappendix.tex}
\end{appendix}

\end{document}

%% file: arxivintroduction.tex
Ensuring the correctness of safety-critical software increasingly relies on formal verification. Functional correctness is commonly expressed in axiomatic program logics: Hoare logic~\citep{Hoare69}, which over-approximates the reachable states of a program; Incorrectness logic~\citep{OHearn2019:incorrectness}, which under-approximates them. The more recent Lisbon~\citep{Ascari23:sil} and Necessary Preconditions~\citep{CousotNecPre} logic reason in the backward direction. Verifying a property directly on an implementation, however, is costly, and the cost is paid again after every change to the code. Refinement avoids this: instead of reasoning about the implementation, one proves the property on a simpler abstract specification and shows that every behavior of the implementation is permitted by the specification~\cite{LynchV95:forward}. Because specifications are stated over clean mathematical structures, the functional proof is usually easier.
 
The standard techniques for relating an implementation to a specification is a \emph{simulation relation}~\cite{LynchV95:forward}, and the standard technique for transferring a property is to pull it back along that relation. This paper studies the interaction between the two. We ask two questions: under forward or backward simulations, what a property proved on the specification guarantees on the implementation; and  how the simulation itself can be discharged using the same functional logics that verify the specification.
 
To keep the development concrete, we carry a single running example throughout: a shared counter supporting concurrent increment and read operations. The property of interest, that no read ever reports more increments than have been invoked, is immediate on an atomic sequential counter, and we transport it down to a Left--Right (LR) concurrent implementation along simulation relations. No single simulation links the implementation to the sequential specification, so the proof factors through an intermediate nondeterministic concurrent (NDC) counter, with a forward simulation on one side and a backward simulation on the other.
 
Section~\ref{sec:background} introduces the three machines and the two relations informally. Section~\ref{sec:preliminaries} describes the simulation framework, the functional logics, and the notion of invariant we transfer. Section~\ref{sec:axiomatic-sim} proves the transfer theorems and applies them to the counter. Section~\ref{sec:validity-sim} gives the simulation validity characterizations, and the appendices discharge the example's forward and backward simulations using Lisbon and Incorrectness triples.

%% file: arxivbackground.tex
We illustrate the development with a single running example: a shared counter supporting concurrent \emph{increment} and \emph{read} operations. Our goal is to verify a safety property of a particular implementation, that no read ever reports more increments than that have actually been invoked. Implementations of concurrent objects are, however, awkward to reason about directly. The strategy of this paper is therefore to prove the property on a simple \emph{specification}, where it is almost immediate, and to transport it down to the implementation along simulation relations. As we will see, no single simulation links the implementation to the simplest specification; the proof factors through an intermediate machine. This section introduces the three machines and the two relations informally and the remaining sections make the reasoning axiomatic.

\subsection{A Concurrent Counter: the Left--Right Implementation}
\label{sec:lr}
 
The implementation we wish to verify, the \emph{Left--Right} (LR) counter, keeps the count split across two registers $L$ and $R$. An increment bumps one of the two registers, chosen nondeterministically, modeling a counter split across two threads or nodes. A read is performed in two steps: at \emph{begin} it snapshots $L$ into a local variable $v_L$, and at \emph{return} it adds the \emph{current} value of $R$ and reports $v_L + R$. This split makes the implementation efficient but subtle: the value a read returns is not determined when it begins, but only when it returns, and it depends on every increment that touches $R$ in between.
 
Throughout, $\mathbb{I}_I$ and $\mathbb{I}_R$ are the (disjoint) sets of increment and read operation identifiers, and $\mathbb{I} = \mathbb{I}_I \cup \mathbb{I}_R$. Each operation identifier carries a program counter that moves from \textsc{init} to \textsc{done} (through \textsc{mid} for the two step operations); the maps $i_{\bullet}$ and $o_{\bullet}$ record the input and output values of each operation and constitute the externally visible interface; the register updates are internal.

\begin{definition}[Left--Right counter]
\label{def:lr}
The state of the LR counter consists of $L \in \mathbb{N}$, $R \in \mathbb{N}$, a program counter $\mathit{pc}_{\mathit{lr}} : \mathcal{I} \to \{\textsc{init},\textsc{mid},\textsc{done}\}$, a snapshot map $v_L : \mathcal{I}_R \to \mathbb{N}$, a set of old increment values $\mathit{old}_{lr} \subseteq \mathbb{N}$, and interface maps $i_{\mathit{lr}}, o_{\mathit{lr}}$. Its operations are
\[
\begin{array}{l}
\textsf{increment}(id):\\
\quad \textbf{assume } \mathit{pc}_{\mathit{lr}}(id) = \textsc{init}\\
\quad \textbf{if } (*)\ \ L \leftarrow L + 1\ \ \textbf{else}\ \ R \leftarrow R + 1\\
\quad \mathit{old}_{lr} \leftarrow \mathit{old}_{lr} \cup \{L+R\} \\
\quad \mathit{pc}_{\mathit{lr}}(id) \leftarrow \textsc{done}\\[4pt]
\textsf{begin\_read}(id):\\
\quad \textbf{assume } \mathit{pc}_{\mathit{lr}}(id) = \textsc{init}\\
\quad v_L(id) \leftarrow L\\
\quad \mathit{pc}_{\mathit{lr}}(id) \leftarrow \textsc{mid}\\[4pt]
\textsf{return\_read}(id):\\
\quad \textbf{assume } \mathit{pc}_{\mathit{lr}}(id) = \textsc{mid}\\
\quad \mathit{pc}_{\mathit{lr}}(id) \leftarrow \textsc{done}\\
\quad o_{\mathit{lr}}(id) \leftarrow v_L(id) + R\\
\quad \textbf{return } v_L(id) + R
\end{array}
\]
\end{definition}

The correctness property of interest is a safety bound: for every read $id$ that has completed, the reported value $o_{\mathit{lr}}(id)$ is at most $L + R$, the number of increments performed so far. Stating and proving this directly on the LR counter means reasoning about all interleavings of the split reads with concurrent increments to $R$, which is precisely the kind of global argument we prefer to avoid.

\subsection{Transferring Correctness from a Sequential Specification}
\label{sec:seq}
 
An idealized \emph{sequential} counter, in which each operation executes atomically, makes the property trivial: a single variable $x$ holds the count, an increment adds one to it atomically, and a read atomically returns $x$. The safety bound is then trivial, because $x$ \emph{is} the number of increments.
 
\begin{definition}[Sequential counter]
\label{def:seq}
The state consists of $x \in \mathbb{N}$, a program counter $\mathit{pc}_s : \mathcal{I} \to \{\textsc{init},\textsc{done}\}$, a set of old increment values $\mathit{old}_{s} \subseteq \mathbb{N}$, and interface maps $i_s, o_s$. Its operations are
\[
\begin{array}{l}
\textsf{increment}(id):\\
\quad \textbf{assume } \mathit{pc}_s(id) = \textsc{init}\\
\quad x \leftarrow x + 1\\
\quad \mathit{old}_s \leftarrow \mathit{old}_s \cup \{x\} \\
\quad \mathit{pc}_s(id) \leftarrow \textsc{done}\\[4pt]
\textsf{read}(id):\\
\quad \textbf{assume } \mathit{pc}_s(id) = \textsc{init}\\
\quad o_s(id) \leftarrow x\\
\quad \mathit{pc}_s(id) \leftarrow \textsc{done}\\
\quad \textbf{return } x
\end{array}
\]
\end{definition}
 
Were the LR counter to refine the sequential counter, the safety bound would transfer downward: a simulation pulls an invariant of the specification back to an invariant of the implementation. The difficulty is finding the witnessing simulation. A forward simulation from LR to the sequential counter would have to match every implementation step against specification steps using only the past and present. But the sequential read is atomic: it must commit to a returned value at one instant, whereas the LR read fixes its value only at \textsf{return\_read}, after an unknown number of interleaving increments to $R$. At \textsf{begin\_read} there is no way to know which sequential state to move to, because the answer depends on the future. A single forward simulation between the two therefore does not exist, and this is the look ahead that, in general, forces backward (prophecy style) reasoning.
 
Rather than search for a more delicate single relation, we factor the proof through an intermediate specification. This is guaranteed by the completeness of forward and backward simulations: when one machine implements another, the trace inclusion can always be witnessed by a forward simulation into some intermediate automaton followed by a backward simulation out of it \citep{LynchV95:forward}. The next machine is an instance of that intermediate automaton.

\subsection{An Intermediate Specification: the Nondeterministic Concurrent Counter}
\label{sec:ndc}
 
The \emph{nondeterministic concurrent} (NDC) counter is designed to be loose. It does not track a single count; instead it records, for each read, which increments that read is obliged to observe and which it is merely permitted to observe. It maintains the set $P$ of \emph{pending} increments (begun but not finished), the set $C$ of \emph{completed} increments, and the set $A$ of \emph{active} reads. When a read begins, it snapshots the completed increments into its \emph{must-see} set $\mathit{MS}$ and the pending increments into its \emph{may-see} set $\mathit{YS}$. When the read returns, it nondeterministically chooses any set $S$ between these two bounds and reports $|S|$. This freedom lets the NDC counter match the LR counter step for step: a value that the split LR read produces at return time is always one of the admissible $|S|$. At the same time the must-see/may-see bookkeeping allows NDC counter to be related to the atomic sequential counter.
 
\begin{definition}[Nondeterministic concurrent counter]
\label{def:ndc}
The state consists of $P \subseteq \mathcal{I}_I$ (pending increments), $C \subseteq \mathcal{I}_I$ (completed increments), $A \subseteq \mathcal{I}_R$ (active reads), maps $\mathit{MS}, \mathit{YS} : \mathcal{I}_R \to \mathcal{P}(\mathcal{I}_I)$ (must-see and may-see), a program counter $\mathit{pc}_n : \mathcal{I} \to \{\textsc{init},\textsc{mid},\textsc{done}\}$, a set of old increment values $\mathit{old}_{n} \subseteq \mathbb{N}$, and interface maps $i_n, o_n$. Its operations are
\[
\begin{array}{l}
\textsf{begin\_increment}(id):\\
\quad \textbf{assume } \mathit{pc}_n(id) = \textsc{init},\ id \notin P,\ id \notin C\\
\quad P \leftarrow P \cup \{id\}\\
\quad \mathit{old}_n \leftarrow \mathit{old}_n \cup \{|C|+|P|\} \\
\quad \textbf{for all } t \in A:\ \mathit{YS}(t) \leftarrow \mathit{YS}(t) \cup \{id\}\\
\quad \mathit{pc}_n(id) \leftarrow \textsc{mid}\\[4pt]
\textsf{return\_increment}(id):\\
\quad \textbf{assume } \mathit{pc}_n(id) = \textsc{mid},\ id \in P,\ id \notin C\\
\quad P \leftarrow P \setminus \{id\};\quad C \leftarrow C \cup \{id\}\\
\quad \mathit{pc}_n(id) \leftarrow \textsc{done}\\[4pt]
\textsf{begin\_read}(id):\\
\quad \textbf{assume } \mathit{pc}_n(id) = \textsc{init},\ id \notin A\\
\quad A \leftarrow A \cup \{id\}\\
\quad \mathit{MS}(id) \leftarrow C;\quad \mathit{YS}(id) \leftarrow P\\
\quad \mathit{pc}_n(id) \leftarrow \textsc{mid}\\[4pt]
\textsf{return\_read}(id):\\
\quad \textbf{assume } \mathit{pc}_n(id) = \textsc{mid},\ id \in A\\
\quad A \leftarrow A \setminus \{id\}\\
\quad \textbf{choose } S \text{ with } \mathit{MS}(id) \subseteq S \subseteq \mathit{MS}(id) \cup \mathit{YS}(id)\\
\quad o_n(id) \leftarrow |S|\\
\quad \mathit{pc}_n(id) \leftarrow \textsc{done}\\
\quad \textbf{return } |S|
\end{array}
\]
\end{definition}

\subsection{The Simulation Chain}
\label{sec:chain}
 
We relate the three machines by
\[
  \text{LR}
  \;\xrightarrow{\;\text{forward}\;}\;
  \text{NDC}
  \;\xrightarrow{\;\text{backward}\;}\;
  \text{Seq},
\]
which lets the composed simulation relation carry the sequential counter's Hoare invariant, the bound stating that every completed read reports at most $L + R$, back to the LR counter as a state predicate. The two relations are given below; both require the interface maps to agree, forcing the same externally visible behaviour.

\subsubsection{Forward Simulation from LR to NDC}
\label{sec:fw-rel}
 
Let the concrete (LR) state be $\sigma = (L, R, \mathit{pc}_{\mathit{lr}}, v_L, i_{\mathit{lr}}, o_{\mathit{lr}})$ and the abstract (NDC) state be $\tau = (P, C, A, \mathit{MS}, \mathit{YS}, \mathit{pc}_n, i_n, o_n)$. The forward simulation relation $R(\sigma, \tau)$ is the conjunction of
\begin{align}
&|C| = L + R \tag{F1}\\
&\forall id \in \mathcal{I}.\ \mathit{pc}_{\mathit{lr}}(id) = \textsc{mid} \iff id \in A \tag{F2}\\
&\forall id \in A.\ |\mathit{MS}(id)| \le v_L(id) + R \le |\mathit{MS}(id) \cup \mathit{YS}(id)| \tag{F3}\\
&\mathit{old}_n \subseteq \mathit{old}_{\mathit{lr}} \tag{F4}\\
&\forall id \in \mathcal{I}.\ \mathit{pc}_n(id) \neq \textsc{init} \implies \mathit{pc}_{\mathit{lr}}(id) \neq \textsc{init} \wedge i_{\mathit{lr}}(id) = i_n(id) \tag{F5}\\
&\forall id \in \mathcal{I}.\ \mathit{pc}_{\mathit{lr}}(id) = \textsc{done} \implies \mathit{pc}_n(id) = \textsc{done} \wedge o_{\mathit{lr}}(id) = o_n(id) \tag{F6}
\end{align}
 
The intuition is an \emph{action correspondence} that matches each implementation step with a sequence of specification steps.
\begin{itemize}
  \item An LR \textsf{increment}, a single atomic step that bumps $L$ or $R$, is matched by the pair \textsf{begin\_increment}; \textsf{return\_increment}, which moves one identifier into $C$. Either way the count rises by one, so $|C| = L+R$ is restored~(F1).
  \item An LR \textsf{begin\_read} is matched by an NDC \textsf{begin\_read}: the snapshot $v_L \leftarrow L$ on the concrete side and $\mathit{MS} \leftarrow C$, $\mathit{YS} \leftarrow P$ on the abstract side establish the two bounds of~(F3) for the new read, with $id$ entering $A$ exactly as $\mathit{pc}_{\mathit{lr}}(id)$ enters \textsc{mid}~(F2).
  \item An LR \textsf{return\_read} reports $v_L(id) + R$. The matching NDC \textsf{return\_read} chooses $S$ with $|S| = v_L(id) + R$; invariant~(F3) guarantees such an $S$ lies between $\mathit{MS}(id)$ and $\mathit{MS}(id) \cup \mathit{YS}(id)$, so the two machines report the same value and (F6) is preserved.
\end{itemize}
Conditions (F5) and (F6) keep the inputs and outputs of corresponding operations equal, so equal observable behaviour is maintained throughout. Because every LR step is answered forward in this way, $R$ is a forward simulation and $\mathit{traces}(\text{LR}) \subseteq \mathit{traces}(\text{NDC})$.
 
\subsubsection{Backward Simulation from NDC to the Sequential Counter}
\label{sec:bw-rel}
 
Now let the concrete (NDC) state be $\sigma = (P, C, A, \mathit{MS}, \mathit{YS}, \mathit{pc}_n, i_n, o_n)$ and the abstract (sequential) state be $\tau = (x, \mathit{pc}_s, i_s, o_s)$. The backward simulation relation $R(\sigma, \tau)$ is the conjunction of
\begin{align}
&|C| \le x \le |C| + |P| \tag{B1}\\
&\mathit{old}_s \subseteq \mathit{old}_n \tag{B2}\\
&\forall id \in \mathcal{I}.\ \mathit{pc}_s(id) \neq \textsc{init} \implies \mathit{pc}_n(id) \neq \textsc{init} \wedge i_n(id) = i_s(id) \tag{B3}\\
&\forall id \in \mathcal{I}.\ \mathit{pc}_n(id) = \textsc{done} \implies \mathit{pc}_s(id) = \textsc{done} \wedge o_n(id) = o_s(id) \tag{B4}
\end{align}
 
The prophecy is expressed by invariant~(B1): the sequential count $x$ is pinned only to the interval between the completed increments and the completed plus pending increments, leaving room to decide \emph{when} an abstract increment takes effect. Because the simulation is backward, that decision is made by looking at the post-state. The action correspondence runs as follows.
\begin{itemize}
  \item Across an NDC \textsf{begin\_increment}; \textsf{return\_increment} pair, exactly one atomic sequential \textsf{increment} must fire. Which of the two concrete steps it accompanies, or whether the abstract machine stutters, is the prophecy, resolved by looking at the post-state so that the sequential count x stays inside the interval (B1); this is the look ahead that a forward simulation could not perform.
  \item An NDC read reports $|S|$, a value between its must-see count $|MS(id)|$ and its may-see count $|MS(id) \cup YS(id)|$. The matching atomic sequential read takes effect at the read's linearization point: the step at which the running count x first equals $|S|$. Because |S| may be smaller than the number of increments completed by the time the read returns, this point generally lies strictly before $\textsf{return\_read}$, at $\textsf{begin\_read}$, or at one of the increments interleaved with the read. Invariant (B1), together with the must-/may-see bounds, guarantees the count passes through $|S|$ while the read is active, so such a step exists; there both machines report $|S|$ and (B4) is established.
  \item Every NDC step at which neither the increment nor a read takes effect, in particular \textsf{return\_read}, whose value was already fixed at the linearization point, and \textsf{begin\_read} for a read that linearizes later, is matched by a sequential stutter.
\end{itemize}
Since every NDC step admits such a matching (or stuttering) abstract step constructed from the future, $R$ is a backward simulation and $\mathit{traces}(\text{NDC}) \subseteq \mathit{traces}(\text{Seq})$.
 
\medskip
 
Composing the two simulations relates the LR counter to the sequential counter, so the safety bound proved on the sequential counter holds of the LR implementation. The forward half pins down how the implementation's choices are reflected in the intermediate machine, and the backward half supplies the look ahead that relates the intermediate machine to the atomic specification. The rest of the paper recasts both halves axiomatically and shows how their validity can be discharged using the functional program logics.

%% file: arxivpreliminaries.tex
\subsection{Refinement Proofs}
\label{sec:refinement}

\subsubsection{Automata, Executions, and Traces}
\label{sec:automata}

We use the labelled transition system model of ~\cite{LynchV95:forward}. An \emph{automaton} $A$ consists of a set $\mathit{states}(A)$ of states, a nonempty set $\mathit{start}(A) \subseteq \mathit{states}(A)$ of start states, a set $\mathit{acts}(A)$ of actions containing a distinguished \emph{internal} action $\tau$, and a set $\mathit{steps}(A) \subseteq \mathit{states}(A) \times \mathit{acts}(A) \times \mathit{states}(A)$ of transitions. The \emph{external} actions are $\mathit{ext}(A) = \mathit{acts}(A) \setminus \{\tau\}$; these are the actions visible to an observer. We write $s \xrightarrow{a}_A s'$ for $(s,a,s') \in \mathit{steps}(A)$, and $\widehat{\beta}$ for the sequence obtained from a sequence of actions $\beta$ by deleting all $\tau$ events.

An \emph{execution fragment} of $A$ is a finite or infinite alternating
sequence $\alpha = s_0\, a_1\, s_1\, a_2\, s_2 \cdots$ of states and
actions with $s_i \xrightarrow{a_{i+1}}_A s_{i+1}$ for all $i$; it is an
\emph{execution} if $s_0 \in \mathit{start}(A)$. A state is
\emph{reachable} if it is the last state of some finite execution, and
we write $\mathit{reach}(A)$ for the set of reachable states. The
\emph{trace} of $\alpha$ is $\mathit{trace}(\alpha) =
\widehat{a_1 a_2 \cdots}$, the sequence of external actions along
$\alpha$, and $\mathit{traces}(A)$ is the set of all traces. For a
finite sequence $\beta$ over $\mathit{ext}(A)$ we write
$s \xrightarrow{\beta}_A s'$ when $A$ has a finite execution fragment
from $s$ to $s'$ with trace $\beta$; this ``weak'' arrow absorbs
internal steps and lets the specification answer one implementation step
with a sequence of its own steps.

The behavioral comparison of interest is \emph{trace inclusion}. For automata $S_1$ and $S_2$ with $\mathit{ext}(S_1) = \mathit{ext}(S_2)$ we write
\[
  S_1 \sqsubseteq_T S_2 \quad\stackrel{\triangle}{\Longleftrightarrow}\quad
  \mathit{traces}(S_1) \subseteq \mathit{traces}(S_2),
\]
read ``$S_1$ implements $S_2$''. Throughout, $S_1$ is the implementation (concrete) automaton and $S_2$ the specification (abstract) one.

\subsubsection{Forward Simulations}
\label{sec:forward}
 
A \emph{forward simulation} from $S_1$ to $S_2$ is a relation $f \subseteq \mathit{states}(S_1) \times \mathit{states}(S_2)$ such that
\begin{enumerate}
  \item if $s \in \mathit{start}(S_1)$ then
        $f[s] \cap \mathit{start}(S_2) \neq \emptyset$, and
  \item if $s \xrightarrow{a}_{S_1} s'$ and $u \in f[s]$, then there exists
        $u' \in f[s']$ such that $u \xrightarrow{\widehat{a}}_{S_2} u'$,
\end{enumerate} 
where $f[s] = \{\, u \mid (s,u) \in f \,\}$. We write $S_1 \leq_F S_2$ when such an $f$ exists. The defining picture is forward in time: starting from a related pair $(s,u)$, every step of the implementation can be answered by a matching move of the specification leading to a still related pair \cite{LynchV95:forward}.
 
\begin{theorem}[Soundness, {\citep{LynchV95:forward}}]
\label{thm:fw-sound}
    $S_1 \leq_F S_2 \implies S_1 \sqsubseteq_T S_2$.
\end{theorem}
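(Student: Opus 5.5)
The plan is to show that every trace of $S_1$ is a trace of $S_2$ by building, from an execution of $S_1$, an execution of $S_2$ with the same trace. Let $\beta \in \mathit{traces}(S_1)$ be witnessed by an execution $\alpha = s_0\, a_1\, s_1\, a_2\, s_2 \cdots$ of $S_1$ with $\mathit{trace}(\alpha) = \beta$. Using condition~1 of a forward simulation, I choose $u_0 \in f[s_0] \cap \mathit{start}(S_2)$. I then define $u_1, u_2, \dots$ inductively so that $u_i \in f[s_i]$ for every $i$.

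For the inductive step, assume $u_i \in f[s_i]$. The transition $s_i \xrightarrow{a_{i+1}}_{S_1} s_{i+1}$ together with condition~2 gives some $u_{i+1} \in f[s_{i+1}]$ and a finite execution fragment $\gamma_{i+1}$ of $S_2$ from $u_i$ to $u_{i+1}$ with $\mathit{trace}(\gamma_{i+1}) = \widehat{a_{i+1}}$. This trace is empty when $a_{i+1} = \tau$ and is the single action $a_{i+1}$ otherwise. Each choice uses the axiom of choice, or dependent choice in the infinite case, which is routine.

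Next, I concatenate the fragments $\gamma_1 \gamma_2 \cdots$. This is well defined because each $\gamma_{i+1}$ begins at the state $u_i$ where $\gamma_i$ ends. The result $\alpha'$ starts at $u_0 \in \mathit{start}(S_2)$, so it is an execution of $S_2$. Its trace is the concatenation $\widehat{a_1}\,\widehat{a_2}\cdots$, which equals $\widehat{a_1 a_2 \cdots} = \beta$ because deleting $\tau$ commutes with concatenation. Hence $\beta \in \mathit{traces}(S_2)$.

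The finite case is a plain induction on the length of $\alpha$. The step that needs the most care is the infinite case. Every $\gamma_i$ is finite, so the infinite concatenation is a legitimate alternating sequence and hence an execution fragment. Its trace is exactly $\beta$, even when all but finitely many $a_i$ are $\tau$ and $\beta$ is finite, because trace is taken after $\tau$-deletion in both automata. One subtlety is that a $\gamma_i$ may have zero steps when $a_i = \tau$. The concatenation must then simply repeat no states, which is consistent with how fragments are concatenated. If $\mathit{traces}$ is meant to include only finite traces, the infinite case can be dropped altogether.
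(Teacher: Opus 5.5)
Your proof is correct. It is the standard argument: choose $u_0 \in f[s_0]\cap\mathit{start}(S_2)$, extend along each $S_1$ step by a weak $S_2$ move, and concatenate the fragments. Because every choice is made forward in time, no K\"onig-style argument is needed even for infinite traces. The paper gives no proof of this theorem; it cites Lynch and Vaandrager, whose proof is essentially this same inductive construction of a corresponding execution.
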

 
\begin{theorem}[Partial completeness, {\citep{LynchV95:forward}}]
\label{thm:fw-complete}
    If $S_2$ is deterministic and $S_1 \sqsubseteq_T S_2$, then $S_1 \leq_F S_2$.
\end{theorem}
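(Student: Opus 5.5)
The statement is partial completeness (Theorem~\ref{thm:fw-complete}): if $S_2$ is deterministic and $S_1 \sqsubseteq_T S_2$, then $S_1 \leq_F S_2$. I read ``deterministic'' in the sense of \cite{LynchV95:forward}: $S_2$ has a unique start state, no internal steps, and for each state $u$ and external action $a$ at most one $u'$ with $u \xrightarrow{a}_{S_2} u'$. Consequently, for every finite external sequence $\beta$ there is at most one $u$ with $u_0 \xrightarrow{\beta}_{S_2} u$, where $u_0$ is the start state.

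The plan is to define the relation explicitly and then check the two clauses of the definition of forward simulation. I will take
\[
f = \{(s,u) \mid \exists \text{ a finite execution } \alpha \text{ of } S_1 \text{ ending in } s \text{ such that } u_0 \xrightarrow{\mathit{trace}(\alpha)}_{S_2} u\}.
\]
That is, $s$ is related to every abstract state reachable by the trace of some execution leading to $s$. By determinism, each such trace yields at most one $u$.

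\textbf{Start clause.} Let $s \in \mathit{start}(S_1)$. The one-state execution $s$ has the empty trace, and $u_0 \xrightarrow{\varepsilon}_{S_2} u_0$ holds via the empty fragment. Hence $u_0 \in f[s] \cap \mathit{start}(S_2)$.

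\textbf{Step clause.} Suppose $s \xrightarrow{a}_{S_1} s'$ and $u \in f[s]$. Choose a witnessing execution $\alpha$ ending in $s$ with $u_0 \xrightarrow{\beta}_{S_2} u$, where $\beta = \mathit{trace}(\alpha)$. Extending $\alpha$ by $a\,s'$ gives an execution $\alpha'$ with trace $\beta\widehat{a}$.

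If $a = \tau$, then $\widehat{a}$ is empty. Take $u' = u$: then $u \xrightarrow{\widehat{a}}_{S_2} u$ holds, and $\alpha'$ witnesses $(s',u) \in f$.

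If $a$ is external, trace inclusion gives $\beta a \in \mathit{traces}(S_2)$. So some execution of $S_2$ has trace $\beta a$. By determinism its prefix with trace $\beta$ must end in the unique state reached by $\beta$, which is $u$. It then takes an $a$-step from $u$ to some $u'$, so $u \xrightarrow{a}_{S_2} u'$, and $\alpha'$ witnesses $(s',u') \in f$.

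\textbf{Main obstacle.} The delicate step is the external case just described. Trace inclusion only says that \emph{some} abstract execution realises $\beta a$, whereas the definition of forward simulation requires a move from the \emph{particular} $u$ already chosen. Determinism closes this gap, since the abstract state reached after $\beta$ is unique. If the paper's notion of determinism permits internal steps, I would instead take $f[s]$ to be the set of all states reachable after traces of executions to $s$. In that case the extension argument has to be adapted: the execution with trace $\beta a$ passes through some state reached after $\beta$, and I would need to argue that this state is the given $u$ or is reachable from it. This is where I would expect to need the precise definition of determinism.
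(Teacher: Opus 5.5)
Your proof is correct and is the standard argument. The paper gives no proof of its own; it imports the result from \cite{LynchV95:forward}, where the forward simulation is your relation, pairing $s$ with the state of $S_2$ reached from its start state by the trace of some execution ending in $s$, and the external step is closed by the uniqueness of that state.

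Your closing hedge is unneeded. The determinism the cited result assumes is a unique start state and, for every $u$ and finite $\beta$, at most one $u'$ with $u \xrightarrow{\beta}_{S_2} u'$, so internal steps can only be self-loops. That is precisely the uniqueness your step clause uses, so the argument goes through verbatim. Under a weaker per-action notion that allows genuine internal moves the statement is false, so no adaptation of the kind you sketch could work.
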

 
Theorem~\ref{thm:fw-complete} also exposes the limitation of forward simulations: completeness is only \emph{partial}, and the determinism of $B$ is essential. When the specification is nondeterministic, trace inclusion need not be witnessed by any forward simulation. The obstruction is temporal. A forward simulation must, at each step, commit to a matching specification state on the basis of the past and present alone, whereas a correct implementation may resolve a nondeterministic choice \emph{later} than the specification does. This is the typical situation in linearizability arguments, where the abstract effect of an operation is fixed only once some future event occurs.

\subsubsection{Backward Simulations}
\label{sec:backward}

Backward simulations are the dual remedy, propagating the relation backward in time so that ``future'' choices can be matched. A \emph{backward simulation} from $S_1$ to $S_2$ is a \emph{total} relation $b \subseteq \mathit{states}(S_1) \times \mathit{states}(S_2)$ such that
\begin{enumerate}
  \item if $s \in \mathit{start}(S_1)$ then $b[s] \subseteq \mathit{start}(S_2)$, and
  \item if $s \xrightarrow{a}_{S_1} s'$ and $u' \in b[s']$, then there
        exists $u \in b[s]$ such that $u \xrightarrow{\widehat{a}}_{S_2} u'$.
\end{enumerate}
We write $S_1 \leq_B S_2$ when such a $b$ exists, and $S_1 \leq_{iB} S_2$ when moreover $b$ is \emph{image-finite}, i.e.\ $b[s]$ is finite for every $s$. The duality with forward simulations is not perfect. Two things differ: the totality requirement and the image-finite condition. From any state the histories are finite but the futures may be infinite, so reasoning backward needs an extra finiteness assumption to be sound for infinite behaviors \cite{LynchV95:forward}.
 
\begin{theorem}[Soundness, {\cite{LynchV95:forward}}]
\label{thm:bw-sound}
    $S_1 \leq_B S_2$ implies inclusion of \emph{finite} traces, and $S_1 \leq_{iB} S_2 \implies S_1 \sqsubseteq_T S_2$ (all traces, including infinite ones).
\end{theorem}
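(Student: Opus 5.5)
We prove the two claims of Theorem~\ref{thm:bw-sound} separately: finite traces by induction on executions, run backward from the end, and infinite traces by a König's-lemma argument that uses image-finiteness.

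\textbf{Finite traces.} Let $\alpha = s_0 a_1 s_1 \cdots a_n s_n$ be a finite execution of $S_1$. Since $b$ is total, $b[s_n]$ is nonempty, so we pick some $u_n \in b[s_n]$. We then go backward from $i = n$ down to $1$. Given $u_i \in b[s_i]$ and the step $s_{i-1} \xrightarrow{a_i}_{S_1} s_i$, clause~2 of the definition yields $u_{i-1} \in b[s_{i-1}]$ with $u_{i-1} \xrightarrow{\widehat{a_i}}_{S_2} u_i$. At $i=0$, clause~1 gives $u_0 \in \mathit{start}(S_2)$ because $s_0 \in \mathit{start}(S_1)$. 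Concatenating the weak fragments $u_{i-1} \xrightarrow{\widehat{a_i}} u_i$ gives an execution of $S_2$. Its trace is $\widehat{a_1}\cdots\widehat{a_n} = \mathit{trace}(\alpha)$, since the weak arrow only inserts or absorbs $\tau$ steps. This direction is routine.

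\textbf{Infinite traces.} Let $\alpha = s_0 a_1 s_1 \cdots$ be an infinite execution and assume $b$ is image-finite. We build a forest whose nodes at level $i$ are the states $u \in b[s_i]$ that are the endpoint of a backward chain like the one above for the prefix of length $i$. Equivalently, these are the $u$ from which a chain can be traced back to level $0$ ending in a start state. Each $u$ at level $i \geq 1$ is assigned a parent: one chosen $u' \in b[s_{i-1}]$ with $u' \xrightarrow{\widehat{a_i}}_{S_2} u$. Clause~2 guarantees that such a parent exists, and clause~1 guarantees that the roots at level $0$ are start states. The resulting structure is a forest with finitely many roots, because $b[s_0]$ is finite. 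Each level is finite by image-finiteness. Each level is nonempty because $b$ is total. Hence the forest is infinite and finitely branching, and König's lemma yields an infinite branch $u_0, u_1, \ldots$ with $u_{i-1} \xrightarrow{\widehat{a_i}}_{S_2} u_i$ for every $i$.

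Gluing the fragments along this branch gives an execution of $S_2$ starting in $\mathit{start}(S_2)$ whose trace is $\mathit{trace}(\alpha)$. One point needs care: the glued sequence may be a finite execution followed by infinitely many zero-length $\tau$-matches. In that case the trace of $\alpha$ is itself finite, and it is already covered by the finite-trace case. Otherwise the glued sequence is a genuine infinite execution with the same trace.

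\textbf{Main obstacle.} The hardest step is setting up the König argument correctly. Parent choices must be fixed so that the structure is really a tree and not a DAG; choosing one parent per node does this. Each level must be finite, which is exactly where image-finiteness enters, and it also explains why plain $\leq_B$ gives only finite-trace inclusion.
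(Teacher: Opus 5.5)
Your proof is correct and follows the route the paper indicates when it cites Lynch--Vaandrager: for finite traces, a backward induction along the execution using totality and the two simulation clauses; for infinite executions, a K\"onig's-lemma argument over the nonempty levels $b[s_i]$, where image-finiteness makes each level finite and so gives finite branching. Your treatment of the case where only finitely many of the glued fragments are nonempty, so that the trace is finite, is also sound.
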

 
For finite traces a backward simulation is sound with no further hypotheses; for infinite traces image-finiteness lets one assemble the infinitely many local matches into a single infinite specification execution, by a K\"onig's-lemma argument \cite{LynchV95:forward}. Backward simulations are likewise only partially complete, with the structural restriction now on the implementation rather than the specification.
 
\begin{theorem}[Partial completeness, {\cite{LynchV95:forward}}]
\label{thm:bw-complete}
    If $S_1$ is a forest and $S_1 \sqsubseteq_T S_2$, then $S_1 \leq_B S_2$; moreover, if $S_2$ has finite invisible nondeterminism then $S_1 \leq_{iB} S_2$.
\end{theorem}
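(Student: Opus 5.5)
The statement is the partial completeness result for backward simulations (Theorem~\ref{thm:bw-complete}): if $S_1$ is a forest and $S_1 \sqsubseteq_T S_2$, then $S_1 \leq_B S_2$, and the simulation can be taken image-finite when $S_2$ has finite invisible nondeterminism. I would build the simulation directly from traces.

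\textbf{Defining the relation.} Since $S_1$ is a forest, every reachable state $s$ has a unique finite execution $\alpha_s$ leading to it. I would set
\[
  b = \{(s,u) \mid s \in \mathit{reach}(S_1),\ u \text{ is the last state of some execution of } S_2 \text{ with trace } \mathit{trace}(\alpha_s)\}.
\]
Unreachable states of $S_1$ can be ignored, or related to all states; they never occur along executions. Totality on reachable states follows from trace inclusion. Indeed, $\mathit{trace}(\alpha_s) \in \mathit{traces}(S_2)$, so some $S_2$ execution has that trace, and its last state lies in $b[s]$.

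\textbf{Start condition.} For a start state $s$ the trace is empty, so $b[s]$ consists of states reachable in $S_2$ by internal steps alone. This fails the condition $b[s]\subseteq\mathit{start}(S_2)$ as literally stated. I would handle it by restricting $b$ on start states to $\mathit{start}(S_2)$, which is nonempty, and keep the general definition elsewhere.

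\textbf{Step condition.} Let $s \xrightarrow{a} s'$ and $u' \in b[s']$. Because $S_1$ is a forest, $\alpha_{s'} = \alpha_s\, a\, s'$. Hence $u'$ ends an $S_2$ execution $\gamma$ with trace $\mathit{trace}(\alpha_s)\widehat{a}$. I would cut $\gamma$ at the last point where the prefix trace equals $\mathit{trace}(\alpha_s)$, or just before the $a$ step when $a$ is external. The state $u$ at that cut lies in $b[s]$, and the remainder of $\gamma$ gives $u \xrightarrow{\widehat a} u'$. The forest property is used exactly here: it makes the history of $s'$ an extension of the history of $s$, so a single trace indexes each state.

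\textbf{Image-finiteness.} Assume finite invisible nondeterminism, i.e.\ finitely many start states and, for each state and finite trace, finitely many reachable end states. Then each $b[s]$, being the set of end states for one finite trace from start, is finite, so $b$ witnesses $S_1 \leq_{iB} S_2$.

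I expect the main obstacle to be the bookkeeping at start states and the internal-step cut point. The plan is to restrict $b$ at start states and to define the cut precisely.
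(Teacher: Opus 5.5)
This is the standard trace-indexed construction behind the Lynch--Vaandrager result, which the paper cites without reproving. You relate each state of the forest to the end states of finite $S_2$-executions carrying the trace of its unique history, use $\alpha_{s'}=\alpha_s\,a\,s'$ to cut the witnessing execution, and obtain image-finiteness from finite invisible nondeterminism; the argument is correct. The one repair is at start states: once $b[s]$ is restricted to $\mathit{start}(S_2)$, the point ``just before the $a$-step'' may follow internal steps and so fall outside $b[s]$. For a start state $s$, cut $\gamma$ at its first state instead. This works because $u \xrightarrow{\widehat{a}}_{S_2} u'$ absorbs the leading internal steps, and start states of a forest have no incoming transitions, so the restriction never matters on the target side. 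Your fallback of relating unreachable states to all of $\mathit{states}(S_2)$ could break the step condition, but it is moot: in a forest every state is reached by exactly one execution.
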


Neither forward nor backward simulations are complete on their own: each is guaranteed to exist only under the structural hypotheses of Theorems~\ref{thm:fw-complete} and~\ref{thm:bw-complete}. Their \emph{combination}, however, is complete. \cite{LynchV95:forward} show that whenever $S_1 \sqsubseteq_T S_2$ there is an intermediate automaton $S_3$ with a forward simulation from $S_1$ to $S_3$ and a backward simulation from $S_3$ to $S_2$ (image-finite when $S_2$ has finite invisible nondeterminism). Thus a forward simulation, followed by a backward simulation, can witness any provable trace inclusion.

\subsubsection{History and Prophecy Variables}
\label{sec:history-prophecy}
 
This completeness has a familiar reading in terms of auxiliary state. A \emph{history relation} is a forward simulation whose inverse is a refinement, and a \emph{prophecy relation} is a backward simulation whose inverse is a refinement \citep{LynchV95:forward}; these are the abstract, relational counterparts of the auxiliary (history) variables of \cite{OwickiGries1976} and the prophecy variables of \cite{AbadiLamport91:refinement}. A history variable records information about the \emph{past} of an execution without affecting its behaviour, and adding one is what is needed when the only obstacle to a refinement mapping is missing memory; this is the forward direction. A prophecy variable nondeterministically \emph{guesses} information about the future and later checks the guess, letting the specification anticipate a choice the implementation has not yet made; this is the backward direction. The classical completeness theorem of \cite{AbadiLamport91:refinement} states that, under machine closure, finite invisible nondeterminism, and internal continuity, trace inclusion always implies the existence of a refinement mapping once the implementation is augmented with suitable history and prophecy variables; the same statement, through the correspondence above, as the completeness of forward-then-backward simulation. This is the theoretical justification for using a forward simulation, a backward simulation, or a composition of the two as the simulation step of a refinement proof, and it is why the development in the remainder of the paper treats both styles.

\begin{remark}[From trace inclusion to invariants]
\label{rem:traces-to-invariants}
    Although the soundness and completeness results above are phrased in terms of trace inclusion, the safety properties we transfer in this paper are \emph{sets of states} rather than sets of traces (Section~\ref{sec:invariants}). The two views meet through reachability: a forward simulation $f$ from $S_1$ to $S_2$ relates every reachable state of $S_1$ to some reachable state of $S_2$, so an invariant of $S_2$ pulls back, along $f$, to an invariant of $S_1$. This state level transfer, and its analogue for backward simulations, is the form actually used in Sections~\ref{sec:axiomatic-sim} onward, where the simulations are recast in the logic of actions and the transfer is carried out with the program logics of Section~\ref{sec:functionalLogics}.
\end{remark}

\subsection{Axiomatic Models}
\label{sec:functionalLogics}

There are new variants of Hoare logic. First, we would like to classify them. Program logics we consider define axiomatic semantics on imperative programs. Hence, we assume that actions consist of triples of the form: a unary predicate on pre-states, a binary predicate representing the program transformation and a unary predicate over post-states.

If a program logic fixes pre-states ( resp., post-states) and puts some restrictions on post-states (resp., pre-states) then it is called a \emph{forward} (resp., \emph{backward}) logic. If the restriction requires being a subset (resp., superset) of some target set, this logic is called an under-approximation (resp., over-approximation) logic. Target sets are either exact reachable post-states of a program from a given pre-state set or exact and all pre-states needed by the program to reach a given post-state set. To define these notions formally, we have to define the image and pre-image of a binary relation.

\begin{definition}[Image and Pre-Image]
Let $R \subseteq \Sigma \times \Theta$ be a binary relation between two state spaces $\Sigma$ and $\Theta$.

\emph{Image} of the set $P \subseteq \Sigma$ is defined as
\[
    R(P) \;\triangleq\; \{ \sigma' \in \Theta \mid \exists \sigma. P(\sigma) \wedge R(\sigma, \sigma') \}
\]
and  \emph{pre-image} of the set $Q \subseteq \Theta$ is defined as
\[
R^{-1}(Q) \;\triangleq\;
\{\;\sigma \in \Sigma \mid \exists \sigma'.\; Q(\sigma') \wedge R(\sigma,\sigma')\;\}.
\]

\end{definition}

If $S \subseteq \Sigma \times \Sigma $ is the binary predicate\footnote{Throughout the paper, we abuse the notation by using set and predicate notions in an interchanging way. For a unary predicate $P(\sigma)$, we assume the set $P = \{ \sigma \mid P(\sigma) \}$ and vice versa. Similarly, binary predicates correspond to binary relations over sets. Predicates take their arguments from base program state sets like $\Sigma$ and $\Theta$.} representing a program transformation, then $S(P)$ and $S^{-1}(P)$ represent the strongest post-condition and the strongest precondition of $S$ for a set of states $P \subseteq \Sigma$, respectively. \footnote{Terms strongest postcondition and precondition are introduced with respect to semantics of Hoare Logic. They might be confusing when used for other logics. Hence, we will stick to images and preimages from this point on.}.

Using these definitions, one can define four program logics\footnote{Actually, there can be more. We deliberately omit the discussion on termination and non-determinism. Depending on what will happen to non-terminating states and multiple states reachable from a single state  in the presence of non-determinism or multiple states terminating at the same target state, one can define different program logics. Decisions that logics we consider take on these matters will be clear when we formally define them later. See \citep{VerschtK25:taxonomy} for a more comprehensive treatment.}:
\begin{itemize}
    \item {Hoare Logic:} A forward over-approximation logic. A triple \psqhoare is a valid Hoare triple if and only if $Q \supseteq S(P)$.
    \item {Incorrectness Logic:} A forward under-approximation logic. A triple \psqincr is a valid Incorrectness triple if and only if $Q \subseteq S(P)$.
    \item {Lisbon Logic:} A backward \textbf{under}-approximation logic \cite{Ascari23:sil}. A triple \psqlisbon is a valid Lisbon triple if and only if $P \subseteq S^{-1}(Q)$.
    \item {Necessary Preconditions Logic:} A backward \textbf{over}-approximation logic \cite{CousotNecPre}. A triple \psqnecpre is a valid necessary precondition triple if and only if $P \supseteq S^{-1}(Q)$.
\end{itemize}

Next, we will formally describe these logic in a language similar to logic of actions. For the following definitions, assume $S \subseteq \Sigma \times \Sigma $ is the program transition relation where $\Sigma$ is the set of states of $S$.

\begin{definition}[Valid Hoare Triple]
    A Hoare triple \psqhoare is valid if and only if 
    \[
    \forall \sigma':\; (\exists \sigma:\; P(\sigma) \wedge S(\sigma, \sigma')) \implies Q(\sigma')
    \]
\end{definition}

\begin{corollary}
    \label{cor:hoaredef}
    A Hoare triple \psqhoare is valid if and only if
    \[
    \forall \sigma, \sigma':\; P(\sigma) \wedge S(\sigma, \sigma') \implies Q(\sigma')
    \]
\end{corollary}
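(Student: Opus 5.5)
This follows from the definition of a valid Hoare triple by pure first-order manipulation, so I will prove the two directions of the biconditional separately. The definition states
\[
\forall \sigma':\; (\exists \sigma:\; P(\sigma) \wedge S(\sigma,\sigma')) \implies Q(\sigma'),
\]
and the corollary asserts that this is equivalent to $\forall \sigma, \sigma':\; P(\sigma) \wedge S(\sigma,\sigma') \implies Q(\sigma')$.

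The key step is the standard logical equivalence $(\exists x.\, \varphi(x)) \implies \psi \;\equiv\; \forall x.\,(\varphi(x) \implies \psi)$, valid whenever $x$ does not occur free in $\psi$. Here $\psi = Q(\sigma')$ does not mention $\sigma$, so the side condition holds.

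\emph{Forward direction.} Assume the definition holds, and fix arbitrary $\sigma,\sigma'$ with $P(\sigma)\wedge S(\sigma,\sigma')$. Then $\sigma$ witnesses $\exists \sigma.\, P(\sigma)\wedge S(\sigma,\sigma')$, so instantiating the definition at $\sigma'$ gives $Q(\sigma')$.

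\emph{Converse.} Assume the universal form, and fix $\sigma'$ with $\exists\sigma.\,P(\sigma)\wedge S(\sigma,\sigma')$. Let $\sigma$ be such a witness. The hypothesis instantiated at $(\sigma,\sigma')$ then yields $Q(\sigma')$.

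Equivalently, in set notation, both statements express $S(P)\subseteq Q$: the first by the definition of the image $S(P)$, the second by unfolding membership in $S(P)$ elementwise.

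There is no real obstacle. The only point requiring care is the scoping side condition, namely that $\sigma$ is not free in $Q(\sigma')$, which is what licenses moving the existential out of the antecedent as a universal.
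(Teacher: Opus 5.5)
Your proof is correct. The paper states this corollary without proof, as an immediate consequence of the definition of a valid Hoare triple. Your argument, turning the existential in the antecedent into an outer universal since $\sigma$ is not free in $Q(\sigma')$, is exactly the implicit justification.
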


\begin{definition}[Valid Incorrectness Triple]
    An Incorrectness triple \psqincr is a valid Incorrectness triple \textit{if and only if} 
    \[
    \forall \sigma'.\; Q(\sigma') \implies
    \exists \sigma.\; P(\sigma) \wedge S(\sigma,\sigma').
    \]
\end{definition}

\begin{definition}[Valid Lisbon Triple]
    A Lisbon triple \psqlisbon is a valid Lisbon triple \textit{if and only if} 
    \[
    \forall \sigma.\; P(\sigma) \implies \exists \sigma'. \; Q(\sigma') \wedge S(\sigma, \sigma')
    \]

\end{definition}

\begin{definition}[Valid Necessary Preconditions Triple]
    A Necessary Preconditions triple \psqnecpre is a valid Necessary Preconditions triple \textit{if and only if} 
    \[
    \forall \sigma.\; (\exists \sigma'. \; Q(\sigma') \wedge S(\sigma, \sigma')) \implies P(\sigma)
    \]
\end{definition}

\begin{corollary}
    \label{cor:necpredef}
    A Necessary Preconditions triple \psqnecpre is valid if and only if
    \[
    \forall \sigma, \sigma':\; Q(\sigma') \wedge S(\sigma, \sigma') \implies P(\sigma)
    \]
\end{corollary}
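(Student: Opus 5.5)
The statement to prove is Corollary~\ref{cor:necpredef}. We need to show that the validity condition for Necessary Preconditions triples, $\forall \sigma.\,(\exists \sigma'.\, Q(\sigma') \wedge S(\sigma,\sigma')) \implies P(\sigma)$, is equivalent to $\forall \sigma,\sigma'.\, Q(\sigma') \wedge S(\sigma,\sigma') \implies P(\sigma)$. The plan mirrors the argument behind Corollary~\ref{cor:hoaredef}: a purely logical manipulation that moves an existential quantifier out of the antecedent of an implication.

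The key step is the first-order equivalence $((\exists x.\,\varphi(x)) \implies \psi) \iff \forall x.\,(\varphi(x) \implies \psi)$, valid whenever $x$ does not occur free in $\psi$. Here we instantiate $x := \sigma'$, $\varphi(\sigma') := Q(\sigma') \wedge S(\sigma,\sigma')$ and $\psi := P(\sigma)$. The side condition holds, since $P(\sigma)$ does not mention $\sigma'$. Applying the equivalence under the outer $\forall \sigma$ and merging the two universal quantifiers gives the claimed form.

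To keep the argument self-contained, I would spell out both directions rather than only cite the quantifier law.

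\begin{itemize}
\item \emph{Definition implies corollary.} Fix $\sigma,\sigma'$ with $Q(\sigma') \wedge S(\sigma,\sigma')$. Then $\sigma'$ witnesses $\exists \sigma'.\,Q(\sigma') \wedge S(\sigma,\sigma')$, so the definition yields $P(\sigma)$.
\item \emph{Corollary implies definition.} Fix $\sigma$ and assume $\exists \sigma'.\,Q(\sigma') \wedge S(\sigma,\sigma')$. Choosing such a witness $\sigma'$ and applying the hypothesis at the pair $(\sigma,\sigma')$ gives $P(\sigma)$.
\end{itemize}

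Alternatively, the result can be phrased set-theoretically: $S^{-1}(Q) \subseteq P$ holds iff every element of $S^{-1}(Q)$, which by definition has some related $\sigma' \in Q$, lies in $P$. This is the same unfolding.

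There is no real obstacle. The only point needing care is the variable-capture side condition, namely that $\sigma'$ is not free in $P(\sigma)$, which is immediate here.
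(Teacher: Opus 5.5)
Your proof is correct and follows the paper's approach. The paper states this as an immediate corollary of the definition of a valid Necessary Preconditions triple, with no written proof. Its only content is the quantifier law $((\exists \sigma'.\,\varphi) \Rightarrow \psi) \iff \forall \sigma'.\,(\varphi \Rightarrow \psi)$ for $\sigma'$ not free in $\psi$, which you apply, and you check both directions explicitly.
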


Corollary \ref{cor:hoaredef} and \ref{cor:necpredef} are useful because it shows us checking validity of a Hoare triple and a Necessary Preconditions triple lays in the decidable fragment of the FOL (first order logic) since there is no quantifier alternation provided that the program transformation predicate $S$ is also inside the decidable fragment.

The validity criteria of Section \ref{sec:validity-sim} single out programs whose transition relation behaves functionally in one direction or the other. We record these properties of a transition relation $S \subseteq \Sigma \times \Sigma$.

\begin{definition}[Functionality conditions]
A program with transition relation $S \subseteq \Sigma \times \Sigma$ is
\begin{itemize}
  \item \emph{total} if $\ \forall \sigma.\ \exists \sigma'.\ S(\sigma,\sigma')$;
  \item \emph{deterministic} if $\ \forall \sigma,\sigma_1',\sigma_2'.\ S(\sigma,\sigma_1') \wedge S(\sigma,\sigma_2') \implies \sigma_1' = \sigma_2'$;
  \item \emph{backward total} if $\ \forall \sigma'.\ \exists \sigma.\ S(\sigma,\sigma')$;
  \item \emph{backward deterministic} if $\ \forall \sigma_1,\sigma_2,\sigma'.\ S(\sigma_1,\sigma') \wedge S(\sigma_2,\sigma') \implies \sigma_1 = \sigma_2$.
\end{itemize}
A total, deterministic $S$ is the graph of a total function on $\Sigma$; a backward total, backward deterministic $S$ is one whose converse is a total function. These are exactly the conditions under which the existential witness in a forward (resp.\ backward) simulation is uniquely pinned, which is what Theorems~8 and~11 exploit.
\end{definition}

\subsection{Invariants}
\label{sec:invariants}
 
We take a \emph{safety property} of a program to be a set of states $J \subseteq \Sigma$ (equivalently, a predicate on states; we use the set and predicate views interchangeably). A program \emph{satisfies} $J$ when every state it can ever reach lies in $J$. Writing $\mathit{reach}(S)$ for the set of states reachable from $\mathit{Init}$ under finitely many steps of the transition relation $S$, this is the inclusion
\[
  \mathit{reach}(S) \;\subseteq\; J .
\]
A set $J$ with this property is called an \emph{invariant} of the program: an over-approximation of its reachable states~\cite{MannaPnueli95,CC77}. The strongest invariant is $\mathit{reach}(S)$ itself, and every weaker invariant is obtained by enlarging it. We deliberately phrase safety in terms of states rather than traces: it is the formulation that meshes with the program logics of Section~\ref{sec:functionalLogics}, whose pre- and post-conditions are themselves predicates on states.
 
Reachability quantifies over all executions, so checking $\mathit{reach}(S) \subseteq J$ directly is a global obligation. The standard device for reducing it to a local one is to require $J$ to be closed under the transition relation, which turns the global inclusion into a single step check~\cite{Floyd67,MannaPnueli95}. In the image notation of Section~\ref{sec:axiomatic-sim} this closure is exactly the validity of the Hoare triple $\{J\}\,S\,\{J\}$, since $\{J\}\,S\,\{J\}$ holds iff $S(J) \subseteq J$. This motivates the following notion.
 
\begin{definition}[Hoare invariant]
\label{def:hoare-invariant}
A set $I \subseteq \Sigma$ is a \emph{Hoare invariant} of a program with
transition relation $S$ and initial states $\mathit{Init}$ if
\begin{enumerate}
  \item $\mathit{Init} \subseteq I$ \hfill\textup{(initiation)}, and
  \item $\{I\}\,S\,\{I\}$ is a valid Hoare triple, i.e.\ $S(I) \subseteq I$
        \hfill\textup{(consecution)}.
\end{enumerate}
\end{definition}
 
Definition~\ref{def:hoare-invariant} is the classical notion of an \emph{inductive invariant}, whose two conditions are traditionally called \emph{initiation} and \emph{consecution}, going back to Floyd's inductive assertions~\cite{Floyd67} and presented in this transition system form by Manna and Pnueli~\cite{MannaPnueli95}; we phrase it through the Hoare triple of Section~\ref{sec:functionalLogics} to align it with the program logics used in the rest of the paper. Inductiveness makes a Hoare invariant a genuine invariant where a one step induction is enough, with no reasoning about whole executions.
 
\begin{proposition}
\label{prop:hoare-inv-is-inv}
    Every Hoare invariant $I$ is an invariant: $\mathit{reach}(S) \subseteq I$.
\end{proposition}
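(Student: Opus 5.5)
We argue by induction on the length of executions, using initiation as the base case and consecution, read through Corollary~\ref{cor:hoaredef}, as the inductive step. The definition of $\mathit{reach}(S)$ is stratified by the number of steps, so we first write
\[
  \mathit{reach}(S) \;=\; \bigcup_{n \ge 0} S^n(\mathit{Init}),
\]
where $S^0(\mathit{Init}) = \mathit{Init}$ and $S^{n+1}(\mathit{Init}) = S(S^n(\mathit{Init}))$ use the image notation of Section~\ref{sec:functionalLogics}. This identity follows from unfolding the definition of reachability: a state is reachable exactly when it is the last state of a finite execution $s_0 \xrightarrow{} s_1 \cdots \xrightarrow{} s_n$ with $s_0 \in \mathit{Init}$.

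The main claim is that $S^n(\mathit{Init}) \subseteq I$ for every $n$, proved by induction on $n$.
\begin{itemize}
  \item For $n = 0$ this is exactly the initiation condition $\mathit{Init} \subseteq I$.
  \item For the step, assume $S^n(\mathit{Init}) \subseteq I$. Image is monotone, which is immediate from the existential in its definition, so $S^{n+1}(\mathit{Init}) = S(S^n(\mathit{Init})) \subseteq S(I)$.
  \item The validity of the Hoare triple $\{I\}\,S\,\{I\}$ gives $S(I) \subseteq I$, which closes the step.
\end{itemize}
Pointwise, the step says: if $s_n \in I$ and $S(s_n, s_{n+1})$, then Corollary~\ref{cor:hoaredef} gives $I(s_{n+1})$. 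Taking the union over $n$ then yields $\mathit{reach}(S) \subseteq I$.

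No step is a real obstacle. The only point needing care is matching the paper's execution-based definition of reachability with the iterated-image formulation. If we want to avoid this, we can instead induct directly on the length of a finite execution ending in a given state, applying Corollary~\ref{cor:hoaredef} at each transition.
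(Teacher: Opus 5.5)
Your proof is correct and is essentially the paper's argument: induction on the number of steps, with initiation as the base case and consecution $S(I)\subseteq I$ as the inductive step. Writing $\mathit{reach}(S)$ as $\bigcup_{n} S^n(\mathit{Init})$ and using monotonicity of the image only restates, at the level of sets, the paper's pointwise induction on the length of a reaching execution, which is the fallback you describe at the end.
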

 
\begin{proof}
    By induction on the length of a reaching execution. Every state of length $0$ is in $\mathit{Init} \subseteq I$ by initiation. If a reachable state $s$ lies in $I$ and $S(s,s')$, then $s' \in S(I) \subseteq I$ by consecution. Hence $\mathit{reach}(S) \subseteq I$.
\end{proof}
 
The converse fails: an invariant need not be closed under $S$, and proving a desired safety property typically requires \emph{strengthening} it to a Hoare invariant that implies it~\cite{MannaPnueli95,CC77}. This is the role Hoare invariants play in our development. On the specification we establish a Hoare invariant $\{I\}\,S_2\,\{I\}$ with $\mathit{Init}_2 \subseteq I$; the simulation relation then transfers it to the implementation (Remark~\ref{rem:traces-to-invariants}), yielding a Hoare invariant on $S_1$ whose initiation and consecution are discharged by the transfer theorems of Section~\ref{sec:axiomatic-sim}. Thus a safety property proved once, on the simpler specification, is carried down to the implementation as a state predicate rather than as a property of traces.

%% file: arxivaxiomaticsim.tex
Relational Hoare logics (RHLs) provide axiomatic semantics for reasoning over multiple programs. In this work, we focus on logics relating two programs. The class of RHLs used in refinement proofs are called $\forall\exists$ RHLs since they are used for proving for all steps of the implementation program, there exists a related state of the specification program. 

For the scope this work, we want RHLs corresponding to the standard backward and forward simulation definitions of \cite{LynchV95:forward}.

We start by defining forward relational triple (FRT) and backward relational triple (BRT) in order to define forward and backward simulations in logic of actions. Let $\Sigma$ and $T$ be states of programs $S_1$ and $S_2$; and $P$ and $Q$ $\subseteq \Sigma \times T$ be binary predicates.

\begin{definition}[Forward Relational Triple]
\label{def:frt}
    Let $S_1$ and $S_2$ be two programs executing on states from $\Sigma$ and $T$, and $P$ and $Q \subseteq\Sigma \times T$ be a binary predicate. Then, $|P\rangle\ S_1; S_2\ |Q \rangle$ is a valid Forward Relational Triple if and only if
    \[
        \forall \sigma, \sigma', \tau. S_1(\sigma, \sigma') \wedge P(\sigma, \tau) \implies \exists \tau'. S_2(\tau, \tau') \wedge Q(\sigma', \tau')
    \]
\end{definition}

\begin{definition}[Backward Relational Triple]
\label{def:brt}
    Let $S_1$ and $S_2$ be two programs executing on states from $\Sigma$ and $T$, and $P$ and $Q \subseteq\Sigma \times T$ be a binary predicate. Then, $|P\rangle\ S_1; S_2\ |Q \rangle$ is a valid Backward Relational Triple if and only if
    \[
        \forall \sigma, \sigma', \tau'. S_1(\sigma, \sigma') \wedge P(\sigma', \tau') \implies \exists \tau. S_2(\tau, \tau') \wedge Q(\sigma, \tau)
    \]
\end{definition}

\begin{definition}[Forward Simulation Relation]
\label{def:fs}
    A relation $R \subseteq \Sigma \times T$ is a forward simulation relation if
    \[R[start(S_1)] \cap start(S_2) \neq \emptyset\]
    \[ \textit{for all atomic coordinated steps } C_1, C_2 \textit{ of } S_1, S_2; \textit {we have } |R\rangle\ C_1; C_2\ |R\rangle\]
\end{definition}

\begin{definition}[Backward Simulation Relation]
\label{def:bs}
    A relation $R \subseteq \Sigma \times T$ is a backward simulation relation if
    \[R[start(S_1)] \subseteq start(S_2)\]
    \[\textit{for all atomic coordinated steps } C_1, C_2 \textit{ of } S_1, S_2; \textit{ we have } \langle R|\ C_1; C_2\ \langle R|\]
\end{definition}




A standard refinement proof consists of two parts: (i) showing correctness of some properties on the specification program, (ii) showing the implementation refines the specification by finding a simulation relation between two programs. Then, one can infer some knowledge about the implementation.

In this work, we want to understand what kind of guarantees we can get for implementation programs if we use a functional program logic from Section \ref{sec:functionalLogics} for the first step and a $\forall\exists$ RHL from Section \ref{sec:axiomatic-sim} for the second step.

We claim that if a property represented by a Hoare invariant holds for the specification program and there is a forward simulation relation from the implementation to the specification, then the corresponding property imposed by the simulation relation holds for the implementation as a Hoare invariant, too. Similarly, an invariant that holds for the specification program, and there is a backward simulation relation from the implementation to the specification, then the corresponding property imposed by the simulation relation holds for the implementation as an invariant too. 

\begin{definition}[Forward Image along a Simulation Relation]
    Let $R(\sigma,\tau)$ relate implementation states $\sigma$ to specification states $\tau$.
    For a predicate $P$ on implementation states,
    \[
    R(P) \;\triangleq\;
    \{\;\tau \mid \exists \sigma.\; P(\sigma)\ \wedge\ R(\sigma,\tau)\;\}.
    \]
\end{definition}

\begin{definition}[Inverse Image along a Simulation Relation]
    Let $R(\sigma,\tau)$ be as above.
    For a predicate $Q$ on specification states,
    \[
    R^{-1}(Q) \;\triangleq\;
    \{\;\sigma \mid \exists \tau.\; Q(\tau)\ \wedge\ R(\sigma,\tau)\;\}.
    \]
\end{definition}

\begin{theorem}
\label{thm:hoareforward}
    Let I be a Hoare invariant on $S_2$, and $R(S_1,S_2)$ be a forward simulation relation. Then, $R^{-1}(I)$ is a Hoare invariant on $S_1$.
\end{theorem}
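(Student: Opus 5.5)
The plan is to verify the two clauses of Definition~\ref{def:hoare-invariant} for the predicate $R^{-1}(I)$ on $S_1$ directly. Initiation will come from the start-state clause of the forward simulation. Consecution will come from the step clause of the forward simulation, recast as the forward relational triple $|R\rangle\ C_1; C_2\ |R\rangle$ of Definition~\ref{def:fs}, together with the consecution of $I$ on $S_2$.

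\textbf{Initiation.} Let $\sigma \in \mathit{Init}_1 = \mathit{start}(S_1)$. I will use the start-state condition in its per-state form from Section~\ref{sec:forward}: $f[\sigma] \cap \mathit{start}(S_2) \neq \emptyset$. This yields some $\tau$ with $R(\sigma,\tau)$ and $\tau \in \mathit{Init}_2$. Since $I$ is a Hoare invariant of $S_2$, we have $\mathit{Init}_2 \subseteq I$, so $I(\tau)$ holds. Hence $\sigma \in R^{-1}(I)$ by the definition of the inverse image.

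One point needs care here. Definition~\ref{def:fs} literally writes $R[\mathit{start}(S_1)] \cap \mathit{start}(S_2) \neq \emptyset$, which only guarantees that \emph{some} start state is related to a specification start state. That is too weak to give $\mathit{Init}_1 \subseteq R^{-1}(I)$. I therefore read the condition pointwise, as in the Lynch--Vaandrager definition, and I will state this reading explicitly. I expect this to be the main obstacle; it is definitional rather than mathematical.

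\textbf{Consecution.} We must show that $\{R^{-1}(I)\}\,S_1\,\{R^{-1}(I)\}$ is a valid Hoare triple. By Corollary~\ref{cor:hoaredef}, it suffices to fix $\sigma,\sigma'$ with $R^{-1}(I)(\sigma)$ and $S_1(\sigma,\sigma')$.

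\begin{enumerate}
  \item Unfolding the inverse image gives a witness $\tau$ with $I(\tau)$ and $R(\sigma,\tau)$.
  \item The step $S_1(\sigma,\sigma')$ is an atomic step $C_1$ with a coordinated specification program $C_2$. Applying the forward relational triple of Definition~\ref{def:frt} to $(\sigma,\sigma',\tau)$ gives $\tau'$ with $C_2(\tau,\tau')$ and $R(\sigma',\tau')$.
  \item $C_2$ may be a finite sequence of $S_2$-steps, possibly empty (a stutter), reflecting the weak arrow $u \xrightarrow{\widehat a}_{S_2} u'$. I will show $I(\tau')$ by induction on the length of this sequence. The base case is $\tau' = \tau \in I$. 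In the inductive step, consecution $S_2(I) \subseteq I$ keeps each intermediate state in $I$, exactly as in the proof of Proposition~\ref{prop:hoare-inv-is-inv}.
  \item Then $\tau'$ witnesses $\sigma' \in R^{-1}(I)$.
\end{enumerate}

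Since $\sigma,\sigma'$ were arbitrary, consecution holds, and together with initiation $R^{-1}(I)$ is a Hoare invariant of $S_1$.
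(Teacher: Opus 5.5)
Your proposal is correct and follows the paper's proof: initiation comes from the start clause of the forward simulation together with $\mathit{Init}(S_2)\subseteq I$, and consecution comes from the step clause together with $S_2(I)\subseteq I$, with the inverse-image witness $\tau$ carried to $\tau'$. The paper silently reads the start condition pointwise and treats the matching specification move as a single $S_2$ step; your explicit pointwise reading and your induction over a possibly empty sequence of $S_2$ steps refine that same argument rather than change it.
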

\begin{proof}
    We verify that $R^{-1}(I)$ satisfies initiation and consecution on $S_1$.
     
    \emph{Initiation.} Let $\sigma\in\mathit{Init}(S_1)$. By the start condition of Definition~10 there is $\tau\in\mathit{Init}(S_2)$ with $R(\sigma,\tau)$. Since $I$ is a Hoare invariant on $S_2$, initiation of $I$ gives $\mathit{Init}(S_2)\subseteq I$, so $I(\tau)$. By Definition~13, $\sigma\in R^{-1}(I)$. Hence $\mathit{Init}(S_1)\subseteq R^{-1}(I)$.
     
    \emph{Consecution.} Let $\sigma\in R^{-1}(I)$ and $S_1(\sigma,\sigma')$. By Definition~13 there is $\tau$ with $R(\sigma,\tau)$ and $I(\tau)$. The forward step condition (Definition~10) applied to $S_1(\sigma,\sigma')$ and $R(\sigma,\tau)$ yields $\tau'$ with $S_2(\tau,\tau')$ and $R(\sigma',\tau')$. Consecution of $I$, i.e. $S_2(I)\subseteq I$, gives $I(\tau')$ from $I(\tau)\wedge S_2(\tau,\tau')$. Thus $R(\sigma',\tau')$ and $I(\tau')$, so $\sigma'\in R^{-1}(I)$. Hence $S_1\!\big(R^{-1}(I)\big)\subseteq R^{-1}(I)$, i.e. $\{R^{-1}(I)\}\,S_1\,\{R^{-1}(I)\}$ is valid.
 
    Therefore $R^{-1}(I)$ is a Hoare invariant on $S_1$.
\end{proof}


\begin{lemma}
\label{lem:reachforward}
    Let $\sigma$ be a reachable state in $S_1$ and let $R(S_1,S_2)$ be a forward simulation. Then \(\exists \tau. R(\sigma,\tau) \wedge\) $\tau$ is reachable in $S_2$.
\end{lemma}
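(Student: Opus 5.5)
The plan is to argue by induction on the length $n$ of a finite execution $\sigma_0 \xrightarrow{} \sigma_1 \xrightarrow{} \cdots \xrightarrow{} \sigma_n = \sigma$ of $S_1$ witnessing that $\sigma$ is reachable. The claim to carry through the induction is exactly the statement: every $\sigma_k$ has some $\tau_k$ with $R(\sigma_k,\tau_k)$ and $\tau_k$ reachable in $S_2$.

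\emph{Base case} ($n=0$). Here $\sigma=\sigma_0\in\mathit{Init}(S_1)$, and the start condition of Definition~\ref{def:fs} (equivalently, condition~1 of the forward simulation in Section~\ref{sec:forward}) yields some $\tau_0 \in \mathit{start}(S_2)$ with $R(\sigma_0,\tau_0)$. A start state is reachable via the empty execution, so the base case holds.

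\emph{Inductive step.} Suppose $\sigma_k$ is related to some reachable $\tau_k$, and let $S_1(\sigma_k,\sigma_{k+1})$. Instantiating the forward relational triple $|R\rangle\, C_1; C_2\, |R\rangle$ of Definition~\ref{def:frt} with $\sigma=\sigma_k$, $\sigma'=\sigma_{k+1}$, $\tau=\tau_k$ yields $\tau_{k+1}$ with $S_2(\tau_k,\tau_{k+1})$ and $R(\sigma_{k+1},\tau_{k+1})$. This is the same step as in the consecution argument of Theorem~\ref{thm:hoareforward}. We then append this move to the execution reaching $\tau_k$ to obtain one reaching $\tau_{k+1}$, so $\tau_{k+1}$ is reachable.

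The only point needing care is that in the automaton formulation the matching move is a weak step $u \xrightarrow{\widehat a}_{S_2} u'$, i.e.\ a finite execution fragment possibly containing several internal steps, or none when $S_2$ stutters. Reachability is closed under concatenating a reachable state's execution with any finite fragment starting at that state, so this causes no difficulty. I would note this explicitly so the lemma covers both Definition~\ref{def:fs} and the Lynch--Vaandrager definition.

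An alternative, slicker route is to apply Theorem~\ref{thm:hoareforward} with $I=\mathit{reach}(S_2)$, which is itself a Hoare invariant since it contains the initial states and is closed under steps. That theorem gives that $R^{-1}(\mathit{reach}(S_2))$ is a Hoare invariant of $S_1$, and Proposition~\ref{prop:hoare-inv-is-inv} then gives $\mathit{reach}(S_1)\subseteq R^{-1}(\mathit{reach}(S_2))$, which is precisely the lemma. I would present this version, with the direct induction as a sanity check.
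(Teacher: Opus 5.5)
Your direct induction is exactly the paper's proof. The version you say you would actually present is a genuinely different decomposition: you instantiate Theorem~\ref{thm:hoareforward} with the strongest invariant $I=\mathit{reach}(S_2)$, which is a Hoare invariant since it contains $\mathit{Init}(S_2)$ and is closed under $S_2$. You then apply Proposition~\ref{prop:hoare-inv-is-inv} to $R^{-1}(\mathit{reach}(S_2))$. This is correct and not circular, because neither Theorem~\ref{thm:hoareforward} nor Proposition~\ref{prop:hoare-inv-is-inv} depends on the lemma.

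Your route buys economy and a structural observation the paper does not make. The induction is done once, inside Proposition~\ref{prop:hoare-inv-is-inv}. Moreover, Theorem~\ref{thm:invforward} then follows from Theorem~\ref{thm:hoareforward} and Proposition~\ref{prop:hoare-inv-is-inv} alone, since every invariant $I$ contains $\mathit{reach}(S_2)$ and $R^{-1}$ is monotone. The paper's self-contained induction instead keeps the forward case parallel to Lemma~\ref{lem:reachbackward}, where the paper has no analogue of Theorem~\ref{thm:hoareforward} to reduce to. It also builds a matching $S_2$ execution step by step.

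Your remark on weak steps is correct and worth keeping; the paper treats the matching move as a single $S_2$ transition without comment.

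One caveat applies to both of your routes and to the paper. The start condition must be read per state, as in condition~1 of Section~\ref{sec:forward}: each $\sigma\in\mathit{start}(S_1)$ is related to some start state of $S_2$. The literal condition $R[\mathit{start}(S_1)]\cap\mathit{start}(S_2)\neq\emptyset$ of Definition~\ref{def:fs} is strictly weaker, not equivalent as you write. Under it the lemma fails as soon as $S_1$ has a second start state related to nothing.
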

\begin{proof}
    By induction on the length $n$ of an execution of $S_1$ reaching $\sigma$.
     
    \emph{Base ($n=0$).} Then $\sigma\in\mathit{Init}(S_1)$. By the start condition there is $\tau\in\mathit{Init}(S_2)$ with $R(\sigma,\tau)$, and $\tau$ is reachable.
     
    \emph{Step.} Let $\sigma$ be reached by an execution $\cdots\sigma_0 \to\sigma$ with $\sigma_0$ reachable in $n-1$ steps and $S_1(\sigma_0,\sigma)$. By the induction hypothesis there is a reachable $\tau_0$ with $R(\sigma_0,\tau_0)$. The forward step condition applied to $S_1(\sigma_0,\sigma)$ and $R(\sigma_0,\tau_0)$ yields $\tau$ with $S_2(\tau_0,\tau)$ and $R(\sigma,\tau)$. Since $\tau_0$ is reachable and $S_2(\tau_0,\tau)$, $\tau$ is reachable.
\end{proof}

\begin{theorem}
\label{thm:invforward}
    Let $I$ be an invariant on $S_2$, and $R(S_1,S_2)$ be a forward simulation relation. Then $R^{-1}(I)$ is an invariant on $S_1$.
\end{theorem}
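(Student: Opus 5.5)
The plan is to reduce the statement directly to Lemma~\ref{lem:reachforward}, which already does the inductive work. An invariant here is any set $J$ with $\mathit{reach}(S) \subseteq J$. So it suffices to show $\mathit{reach}(S_1) \subseteq R^{-1}(I)$, given $\mathit{reach}(S_2) \subseteq I$ and that $R$ is a forward simulation from $S_1$ to $S_2$.

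Fix an arbitrary $\sigma \in \mathit{reach}(S_1)$. By Lemma~\ref{lem:reachforward} there is a $\tau$ with $R(\sigma,\tau)$ and $\tau \in \mathit{reach}(S_2)$. Since $I$ is an invariant of $S_2$, we get $\tau \in I$. Then $\sigma$ has a witness $\tau$ with $I(\tau) \wedge R(\sigma,\tau)$, so $\sigma \in R^{-1}(I)$ by the definition of the inverse image along a simulation relation. Since $\sigma$ was arbitrary, $\mathit{reach}(S_1) \subseteq R^{-1}(I)$, which is the claim.

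Unlike Theorem~\ref{thm:hoareforward}, no consecution argument is possible: $I$ need not be closed under $S_2$, so $R^{-1}(I)$ need not be closed under $S_1$. The essential point is therefore to route everything through reachability rather than through one-step closure. That is exactly why the lemma tracks reachability of the matching specification state $\tau$, not mere membership in $I$. I expect no real obstacle beyond this. The only subtlety is that the weak arrow in the forward step may let $S_2$ answer with several steps (or none). Reachability is closed under finite sequences of $S_2$-steps, and the proof of Lemma~\ref{lem:reachforward} already accounts for this, so I would take that lemma as given.
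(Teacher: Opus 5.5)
Your proof is correct and matches the paper's argument step for step: take $\sigma \in \mathit{reach}(S_1)$, obtain a reachable $\tau$ with $R(\sigma,\tau)$ from Lemma~\ref{lem:reachforward}, get $I(\tau)$ from $\mathit{reach}(S_2)\subseteq I$, and conclude $\sigma \in R^{-1}(I)$. Your remark about the weak arrow is also sound: the paper's lemma treats the matching move as a single $S_2$ step, but reachability is closed under finite step sequences, so nothing changes.
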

\begin{proof}
    Let $\sigma\in\mathit{reach}(S_1)$. By Lemma~1 there is $\tau$ with $R(\sigma,\tau)$ and $\tau\in\mathit{reach}(S_2)$. Since $I$ is an invariant on $S_2$, $\mathit{reach}(S_2)\subseteq I$, so $I(\tau)$. By Definition~13, $\sigma\in R^{-1}(I)$. Hence $\mathit{reach}(S_1)\subseteq R^{-1}(I)$, i.e. $R^{-1}(I)$ is an invariant on $S_1$.
\end{proof}

\begin{lemma}
\label{lem:reachbackward}
    Let $\sigma$ be a reachable state in $S_1$, and $R(S_1,S_2)$ be a backward simulation relation. Then, \(\forall \tau. R(\sigma, \tau) \implies \tau\) is reachable in $S_2$.
\end{lemma}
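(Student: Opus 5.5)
We argue by induction on the length $n$ of an execution of $S_1$ that reaches $\sigma$, in the same way as the proof of Lemma~\ref{lem:reachforward}. The quantifier is now universal over related specification states, so the claim we carry through the induction is: every $\tau$ with $R(\sigma,\tau)$ is reachable in $S_2$. The two ingredients are the start condition $R[\mathit{start}(S_1)]\subseteq\mathit{start}(S_2)$ and the backward step condition. The step condition says that whenever $S_1(\sigma_0,\sigma)$ and $R(\sigma,\tau)$ hold, there is some $\tau_0$ with $R(\sigma_0,\tau_0)$ and $S_2(\tau_0,\tau)$.

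\emph{Base ($n=0$).} Here $\sigma\in\mathit{Init}(S_1)$. Take any $\tau$ with $R(\sigma,\tau)$. The start condition of Definition~\ref{def:bs} (equivalently, condition 1 of the backward simulation in Section~\ref{sec:backward}) puts $\tau\in\mathit{start}(S_2)$, so $\tau$ is reachable via the empty execution. No witness has to be chosen here, which is exactly why backward simulations use inclusion rather than nonempty intersection.

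\emph{Step.} Suppose $\sigma$ is reached by an execution whose last step is $S_1(\sigma_0,\sigma)$, where $\sigma_0$ is reachable in $n-1$ steps. Fix an arbitrary $\tau$ with $R(\sigma,\tau)$. Applying the backward step condition to $S_1(\sigma_0,\sigma)$ and $R(\sigma,\tau)$ gives a $\tau_0$ with $R(\sigma_0,\tau_0)$ and $S_2(\tau_0,\tau)$. By the induction hypothesis, applied to $\sigma_0$ and this particular $\tau_0$, the state $\tau_0$ is reachable in $S_2$. Then $\tau$ is reachable by extending that execution with one step, or with a weak step $\tau_0\xrightarrow{\widehat a}_{S_2}\tau$, since a finite execution fragment still preserves reachability.

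The main point needing care is that the induction hypothesis must be the full universal statement ``every $R$-related state of $\sigma_0$ is reachable''. The step condition gives us no control over which $\tau_0$ it produces, so a weaker existential hypothesis would not be enough. With the universal hypothesis, the step closes for whatever $\tau_0$ appears. Totality of $R$ is not needed for this lemma; if $R[\sigma]$ is empty, the statement holds vacuously. Totality only matters later, when this lemma is used to transfer invariants.
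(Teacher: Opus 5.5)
Your proof is correct and follows the paper's argument essentially step for step. Both use induction on execution length, discharge the base case with $R[\mathit{start}(S_1)]\subseteq\mathit{start}(S_2)$, and in the step case apply the backward step condition and then invoke the universally quantified induction hypothesis on whatever $\tau_0$ it produces. Your remark that totality is not needed here also matches the paper, where totality enters only in the subsequent invariant-transfer theorem.
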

\begin{proof}
    By induction on the length $n$ of an execution of $S_1$ reaching $\sigma$.
     
    \emph{Base ($n=0$).} Then $\sigma\in\mathit{Init}(S_1)$. By the initiation condition of Definition~11, every $\tau$ with $R(\sigma,\tau)$ lies in $\mathit{Init}(S_2)$ and is therefore reachable.
     
    \emph{Step.} Let $\sigma$ be reached by $\cdots\sigma_0\to\sigma$ with $\sigma_0$ reachable and $S_1(\sigma_0,\sigma)$, and let $\tau$ be any state with $R(\sigma,\tau)$. The backward step condition applied to $S_1(\sigma_0,\sigma)$ and $R(\sigma,\tau)$ yields $\tau_0$ with $S_2(\tau_0,\tau)$ and $R(\sigma_0,\tau_0)$. By the induction hypothesis applied to the reachable $\sigma_0$ and $\tau_0$ (which satisfies $R(\sigma_0,\tau_0)$), $\tau_0$ is reachable; since $S_2(\tau_0,\tau)$, $\tau$ is reachable.
\end{proof}

\begin{theorem}
\label{thm:invbackward}
    Let I be an invariant on $S_2$, and $R(S_1,S_2)$ be a backward simulation relation. Then, $R^{-1}(I)$ is an invariant on $S_1$.
\end{theorem}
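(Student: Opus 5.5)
The argument mirrors the proof of Theorem~\ref{thm:invforward}, with Lemma~\ref{lem:reachbackward} in place of Lemma~\ref{lem:reachforward}. We must show $\mathit{reach}(S_1) \subseteq R^{-1}(I)$. So fix an arbitrary $\sigma \in \mathit{reach}(S_1)$. The goal is then to exhibit some $\tau$ with $R(\sigma,\tau)$ and $I(\tau)$, since by the definition of the inverse image along a simulation relation this places $\sigma$ in $R^{-1}(I)$.

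The key difference from the forward case is the source of the witness $\tau$. Lemma~\ref{lem:reachbackward} is universal rather than existential: it says every $\tau$ related to a reachable $\sigma$ is reachable, but it does not supply such a $\tau$. For that I will use the \emph{totality} requirement built into the definition of backward simulation in Section~\ref{sec:backward}. Totality means that for every $\sigma \in \mathit{states}(S_1)$ there exists $\tau$ with $R(\sigma,\tau)$.

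With such a $\tau$ in hand, the remaining steps are as follows.
\begin{enumerate}
  \item Lemma~\ref{lem:reachbackward}, applied to the reachable $\sigma$ and to $\tau$, gives $\tau \in \mathit{reach}(S_2)$.
  \item Since $I$ is an invariant of $S_2$, we have $\mathit{reach}(S_2) \subseteq I$, so $I(\tau)$ holds.
  \item Hence $\sigma \in R^{-1}(I)$, and since $\sigma$ was arbitrary, $\mathit{reach}(S_1) \subseteq R^{-1}(I)$. So $R^{-1}(I)$ is an invariant of $S_1$.
\end{enumerate}

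The only real obstacle is producing the witness $\tau$. Definition~\ref{def:bs} in Section~\ref{sec:axiomatic-sim} restates the backward simulation without explicitly mentioning totality. I will therefore invoke totality from the original Lynch--Vaandrager definition in Section~\ref{sec:backward}, and note that it is needed. Without totality, an unrelated reachable $\sigma$ would make $R^{-1}(I)$ fail to contain $\sigma$, and the statement would be false.

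Unlike Theorem~\ref{thm:hoareforward}, I would not claim that $R^{-1}(I)$ is a \emph{Hoare} invariant. The backward step condition gives no control over $\tau'$ after an $S_1$ step, so consecution cannot be established this way. The conclusion is only that $R^{-1}(I)$ is an invariant, obtained through reachability.
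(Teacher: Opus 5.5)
Your proposal is correct and follows the paper's proof exactly: fix a reachable $\sigma$, obtain a related $\tau$ by totality, apply Lemma~\ref{lem:reachbackward} to conclude $\tau$ is reachable, and hence $I(\tau)$ and $\sigma\in R^{-1}(I)$. Your remark about totality is accurate and more careful than the paper: Definition~\ref{def:bs} does not state totality, although the paper cites it there, so it must be taken from the Lynch--Vaandrager definition in Section~\ref{sec:backward}; without it, $R=\emptyset$ would satisfy Definition~\ref{def:bs} vacuously and refute the statement.
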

\begin{proof}
    Let $\sigma\in\mathit{reach}(S_1)$. By totality (Definition~11) there is $\tau$ with $R(\sigma,\tau)$. By Lemma~2, $\tau\in\mathit{reach}(S_2)$. Since $I$ is an invariant on $S_2$, $I(\tau)$. By Definition~13, $\sigma\in R^{-1}(I)$. Hence $\mathit{reach}(S_1)\subseteq R^{-1}(I)$, i.e. $R^{-1}(I)$ is an invariant on $S_1$.
\end{proof}

\begin{corollary}
\label{cor:hoarebackward}
    Let I be a Hoare invariant on $S_2$, and $R(S_1,S_2)$ be a backward simulation relation. Then, $R^{-1}(I)$ is an invariant on $S_1$.
\end{corollary}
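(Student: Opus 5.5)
This corollary follows from Theorem~\ref{thm:invbackward} once we note that every Hoare invariant is an invariant. The plan has two steps and no new induction.

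First, let $I$ be a Hoare invariant on $S_2$. By Proposition~\ref{prop:hoare-inv-is-inv}, initiation $\mathit{Init}(S_2)\subseteq I$ and consecution $\{I\}\,S_2\,\{I\}$ give $\mathit{reach}(S_2)\subseteq I$. So $I$ is an invariant of $S_2$ in the sense of Section~\ref{sec:invariants}.

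Second, we apply Theorem~\ref{thm:invbackward} to this invariant and the backward simulation $R$. That theorem uses totality of $R$ together with Lemma~\ref{lem:reachbackward} and yields $\mathit{reach}(S_1)\subseteq R^{-1}(I)$, which is exactly the claim.

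There is no real obstacle. The only point that needs care is the form of the conclusion: it asserts only that $R^{-1}(I)$ is an \emph{invariant}, not a Hoare invariant, and we should not attempt to prove the stronger statement. Consecution of $R^{-1}(I)$ need not hold under a backward simulation. A state $\sigma\in R^{-1}(I)$ may be related to some $\tau\in I$, yet a successor $\sigma'$ of $\sigma$ may be related only to states $\tau'$ whose backward-matching predecessors lie outside $I$. Likewise, initiation of $R^{-1}(I)$ on $\mathit{Init}(S_1)$ would already require totality. For these reasons the corollary passes through the weaker notion of invariant, which is exactly what Theorem~\ref{thm:invbackward} supplies.
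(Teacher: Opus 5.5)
Your proposal is correct and is the same argument as the paper's: Proposition~\ref{prop:hoare-inv-is-inv} makes $I$ an invariant of $S_2$, and Theorem~\ref{thm:invbackward} then gives $\mathit{reach}(S_1)\subseteq R^{-1}(I)$. Your side remark is also accurate: under a backward simulation, consecution of $R^{-1}(I)$ can fail at unreachable states, so the weaker conclusion is the right one.
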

\begin{proof}
    Every Hoare invariant is an invariant (Proposition~1), so $I$ is an invariant on $S_2$ and Theorem~7 gives that $R^{-1}(I)$ is an invariant on $S_1$.
\end{proof}

\subsection{Application: Functional Correctness of the Counters}
\label{sec:counter-application}
 
We apply the theory to the three counters of Section~\ref{sec:background}, whose states carry the history field $\mathit{old}$ and whose simulation relations contain the history containments (B2) and (F4). We transfer one property down the chain: that the history set records every value up to the current count. On each machine it has the shape $\forall v \le \textit{count}.\ v \in \mathit{old}$, with the count instantiated to $x$, $|C|$, and $L+R$ in turn. Established as a Hoare invariant on the sequential counter, it is carried to NDC by the backward relation and to LR by the forward relation, arriving as $\forall v \le L+R.\ v \in \mathit{old}_{lr}$ on the implementation.
 
We first record the contents of the history fields fixed by the update rules of Section~\ref{sec:background}. They are used to interpret the transferred invariants, not in the transfer itself.
 
\begin{proposition}[Structural invariants]\label{prop:struct}
    The augmented machines satisfy, at all reachable states,
    \[
      \mathit{old}_s = \{0,\dots,x\}, \qquad
      \mathit{old}_n = \{0,\dots,|C|+|P|\}, \qquad
      \mathit{old}_{lr} = \{0,\dots,L+R\}.
    \]
\end{proposition}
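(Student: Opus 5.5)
For each machine I would show that the stated equation is a Hoare invariant in the sense of Definition~\ref{def:hoare-invariant}, and then conclude with Proposition~\ref{prop:hoare-inv-is-inv}. The three cases are handled separately. Each one has the same shape: an \emph{initiation} check on start states, and a \emph{consecution} check done operation by operation, reading each operation's effect directly from Definitions~\ref{def:seq}, \ref{def:ndc} and~\ref{def:lr}.

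\emph{Initiation.} The paper does not spell out the start states. I will take the natural ones: $x=0$, $C=P=A=\emptyset$, $L=R=0$, all program counters \textsc{init}, and $\mathit{old}_s=\mathit{old}_n=\mathit{old}_{lr}=\{0\}$. Under this reading initiation is immediate. If the intended start states used $\mathit{old}=\emptyset$ instead, the statement would have to be read with $\{0\}$ already included, so I would make this assumption explicit at the start of the proof.

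\emph{Sequential counter.} \textsf{read} changes neither $x$ nor $\mathit{old}_s$. \textsf{increment} replaces $x$ by $x+1$ and then adds the \emph{updated} value of $x$ to $\mathit{old}_s$. This turns $\{0,\dots,x\}$ into $\{0,\dots,x+1\}$, so the invariant is preserved.

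\emph{Left--Right counter.} \textsf{increment} raises $L+R$ by exactly one, whichever branch is taken. It then adds the new value of $L+R$ to $\mathit{old}_{lr}$, which is the same calculation as in the sequential case. \textsf{begin\_read} and \textsf{return\_read} leave $L$, $R$ and $\mathit{old}_{lr}$ unchanged.

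\emph{NDC counter.} I would prove the strengthened predicate
\[
  \mathit{old}_n=\{0,\dots,|C|+|P|\}\ \wedge\ P\cap C=\emptyset .
\]
For \textsf{begin\_increment}, the guard $id\notin P$ makes $|P|$ grow by exactly one. The update is then added to $\mathit{old}_n$ using the new value of $|C|+|P|$, which extends the interval by one, and the guard $id\notin C$ preserves disjointness. For \textsf{return\_increment}, the guards $id\in P$ and $id\notin C$ mean that $id$ moves from $P$ to $C$. Then $|P|$ drops by one and $|C|$ rises by one, so $|C|+|P|$ is unchanged, and disjointness is preserved. Both read operations leave $P$, $C$ and $\mathit{old}_n$ untouched.

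I expect this NDC consecution step to be the only real obstacle. The sum $|C|+|P|$ is invariant under \textsf{return\_increment} only because the identifier moves between two \emph{disjoint} sets, and the equation alone is not inductive without the disjointness conjunct. The first thing I would do is therefore add $P\cap C=\emptyset$ to the invariant. If that is not enough, I would also add $P=\{id\mid \mathit{pc}_n(id)=\textsc{mid}\}\cap\mathcal{I}_I$. The other subtle point is where the update order matters: in all three machines the value inserted into $\mathit{old}$ must be the post-update count, and each consecution check has to read it off that way.
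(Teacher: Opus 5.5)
Your proposal is correct and follows the paper's own argument. You show each equation is a Hoare invariant: initiation holds from zero counts with $\mathit{old}=\{0\}$, and consecution holds because every increment raises the count by one and inserts the post-update value, while reads touch neither; Proposition~\ref{prop:hoare-inv-is-inv} then applies. The only divergence is your extra conjunct $P\cap C=\emptyset$. It is harmless but not needed: the guards $id\in P$ and $id\notin C$ of \textsf{return\_increment} alone make $|P|$ drop by exactly one and $|C|$ rise by exactly one, so the bare equation is already inductive, contrary to your remark.
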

 
\begin{proof}
Each is a Hoare invariant by one step induction: initially every count is $0$ and $\mathit{old} = \{0\}$; each increment raises the relevant begun count ($x$, $|C|+|P|$, and $L+R$ respectively) by one and inserts exactly that new value into $\mathit{old}$; reads change neither.
\end{proof}
 
The two history containments are not independent obligations: under Proposition~\ref{prop:struct}, (B2) $\mathit{old}_s \subseteq \mathit{old}_n$ is the upper half of (B1), and (F4) $\mathit{old}_n \subseteq \mathit{old}_{lr}$ follows from (F1) once (F5) and (F6) force $P = \emptyset$ at related states; this is why adding the clauses left the simulation proofs of Section~\ref{sec:background} essentially unchanged.
 
\paragraph{The specification invariant.}
On the sequential counter we take
\[
  I_s \;\triangleq\; \forall v.\ v \le x \Rightarrow v \in \mathit{old}_s,
\]
``every value up to the current count has been recorded.''
 
\begin{proposition}\label{prop:Is}
    $I_s$ is a Hoare invariant on the sequential counter.
\end{proposition}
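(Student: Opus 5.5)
We check the two conditions of Definition~\ref{def:hoare-invariant} directly against the sequential counter of Definition~\ref{def:seq}, taking $S$ to be the union of the transition relations of all \textsf{increment}$(id)$ and \textsf{read}$(id)$ steps.

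\emph{Initiation.} The initial states have $x = 0$ and $\mathit{old}_s = \{0\}$; this is the convention already used in the proof of Proposition~\ref{prop:struct}. The only $v \in \mathbb{N}$ with $v \le 0$ is $0$, and $0 \in \mathit{old}_s$, so $\mathit{Init} \subseteq I_s$.

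\emph{Consecution.} By Corollary~\ref{cor:hoaredef} it suffices to take an arbitrary pair with $I_s(\sigma)$ and $S(\sigma,\sigma')$ and show $I_s(\sigma')$. We split on the action taken.

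For \textsf{read}$(id)$, the fields $x$ and $\mathit{old}_s$ are unchanged; only $o_s$ and $\mathit{pc}_s$ are updated. So $I_s(\sigma')$ is literally $I_s(\sigma)$.

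For \textsf{increment}$(id)$, the post-state has $x' = x+1$ and $\mathit{old}_s' = \mathit{old}_s \cup \{x'\}$. Here we read the assignment $\mathit{old}_s \leftarrow \mathit{old}_s \cup \{x\}$ as executed after $x \leftarrow x+1$, so it inserts the new value $x+1$. Given $v \le x+1$, either $v \le x$, in which case $v \in \mathit{old}_s \subseteq \mathit{old}_s'$ by the hypothesis, or $v = x+1 = x'$, in which case $v \in \mathit{old}_s'$ by the update. Hence $\{I_s\}\,S\,\{I_s\}$ is valid.

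The only delicate point is the sequential reading of the increment body. The new value $x+1$ must be the one inserted; otherwise the case $v = x+1$ fails. This is the same reading under which Proposition~\ref{prop:struct} states $\mathit{old}_s = \{0,\dots,x\}$, so it is consistent with the paper.

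We deliberately do not derive the result from Proposition~\ref{prop:struct}. That proposition yields $I_s$ only on reachable states, which makes it an invariant but not an inductive one. The direct one-step check above is what establishes consecution for every state satisfying $I_s$.
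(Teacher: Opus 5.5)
Your proof is correct and follows the paper's argument essentially verbatim. Initiation comes from $x=0$, $\mathit{old}_s=\{0\}$, and consecution uses the same case split: \textsf{read} leaves $x$ and $\mathit{old}_s$ unchanged, and \textsf{increment} is handled by splitting $v\le x+1$ into $v\le x$ or $v=x+1$, where your reading of the inserted value as the post-increment $x+1$ matches the paper's update $\mathit{old}_s\cup\{x+1\}$.
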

 
\begin{proof}
    \emph{Initiation.} Initially $x = 0$ and $\mathit{old}_s = \{0\}$, so the only $v \le 0$ is $v = 0 \in \mathit{old}_s$.
    
    \emph{Consecution.}
    \textsf{increment} sends $x$ to $x+1$ and $\mathit{old}_s$ to $\mathit{old}_s \cup \{x+1\}$; assuming $\forall v \le x.\ v \in \mathit{old}_s$, any $v \le x+1$ is either $\le x$, hence in $\mathit{old}_s$, or equal to $x+1 \in \mathit{old}_s \cup \{x+1\}$.
    \textsf{read} changes neither $x$ nor $\mathit{old}_s$.
\end{proof}
 
\paragraph{Step 1: Seq $\to$ NDC (backward).}
By Corollary~4, $R_B^{-1}(I_s)$ is an invariant on NDC. Define
\[
  I_n \;\triangleq\; \forall v.\ v \le |C| \Rightarrow v \in \mathit{old}_n.
\]
 
\begin{corollary}\label{cor:In}
    $I_n$ is an invariant on the NDC counter.
\end{corollary}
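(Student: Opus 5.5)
The plan is to obtain $I_n$ from the transferred predicate $R_B^{-1}(I_s)$, which is already an invariant on NDC by Corollary~\ref{cor:hoarebackward} together with Proposition~\ref{prop:Is}. Invariants are closed under weakening: if $\mathit{reach}(\mathrm{NDC}) \subseteq R_B^{-1}(I_s)$ and $R_B^{-1}(I_s) \subseteq I_n$, then $\mathit{reach}(\mathrm{NDC}) \subseteq I_n$. So the only real work is the set inclusion $R_B^{-1}(I_s) \subseteq I_n$, a purely logical consequence of the clauses of the backward relation.

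To prove the inclusion, take an NDC state $\sigma$ in $R_B^{-1}(I_s)$. By the definition of inverse image there is a sequential state $\tau$ with $R_B(\sigma,\tau)$ and $I_s(\tau)$, i.e.\ $\forall v \le x.\ v \in \mathit{old}_s$. Fix $v \le |C|$. The lower half of (B1) gives $|C| \le x$, so $v \le x$, and $I_s(\tau)$ gives $v \in \mathit{old}_s$. Clause (B2), $\mathit{old}_s \subseteq \mathit{old}_n$, then yields $v \in \mathit{old}_n$. Hence $I_n(\sigma)$.

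The only point needing care is that the transferred predicate is existential in $\tau$. Clause (B1) is used only in its lower bound, to convert the bound on $|C|$ into a bound on $x$, and (B2) carries membership from the abstract history to the concrete one. No step requires reasoning about NDC transitions, since that has already been handled by Lemma~\ref{lem:reachbackward} and totality inside Theorem~\ref{thm:invbackward}. Consequently $I_n$ is obtained as an invariant, not as a Hoare invariant, consistent with the backward direction of the transfer.
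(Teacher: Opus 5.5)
Your proposal is correct and matches the paper's proof essentially step for step. The paper also gets that $R_B^{-1}(I_s)$ is an invariant on NDC from Corollary~\ref{cor:hoarebackward} together with Proposition~\ref{prop:Is}, uses the fact that a superset of an invariant is an invariant, and proves $R_B^{-1}(I_s) \subseteq I_n$ by the same chain: $v \le |C| \le x$ by the lower half of (B1), then $v \in \mathit{old}_s$ by $I_s$, then $v \in \mathit{old}_n$ by (B2).
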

 
\begin{proof}
    Since a superset of an invariant is an invariant, it suffices to show $R_B^{-1}(I_s) \subseteq I_n$. Let $\sigma_n \in R_B^{-1}(I_s)$, witnessed by a Seq state $\tau_s$ with $R_B(\sigma_n,\tau_s)$ and $I_s(\tau_s)$, and let $v \le |C|$. By (B1), $|C| \le x$, so $v \le x$; by $I_s(\tau_s)$, $v \in \mathit{old}_s$; by the history clause (B2), $\mathit{old}_s \subseteq \mathit{old}_n$. Hence $v \in \mathit{old}_n$.
\end{proof}
 
The bound $|C|$ in $I_n$ is the \emph{lower} end of the interval $[|C|,|C|+|P|]$ of (B1): the inverse image existentially chooses the sequential count $x$, and the weakest, hence the only guaranteed, witness sits at $x = |C|$. In fact $\mathit{old}_n$ contains every value up to $|C|+|P|$ (Proposition~\ref{prop:struct}), but the backward transfer certifies only the range up to $|C|$; the pending range is not carried across. This is the one place in the chain where a relation weakens the transferred predicate, and it is why $I_n$ is stated as $R_B^{-1}(I_s)$ subsetted rather than as a count substituted copy of $I_s$.
 
\paragraph{Step 2: NDC $\to$ LR (forward).}
$I_n$ is an invariant -- not a Hoare invariant -- on NDC, so Theorem \ref{thm:hoareforward} does not apply and we use Theorem~6: $R_F^{-1}(I_n)$ is an invariant on LR. Define
\[
  I_{lr} \;\triangleq\; \forall v.\ v \le L+R \Rightarrow v \in \mathit{old}_{lr}.
\]
 
\begin{corollary}\label{cor:Ilr}
    $I_{lr}$ is an invariant on the LR counter.
\end{corollary}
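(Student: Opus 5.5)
The plan mirrors Corollary~\ref{cor:In}. By Corollary~\ref{cor:In}, $I_n$ is an invariant on NDC. $R_F$ is a forward simulation from LR to NDC, so Theorem~\ref{thm:invforward} gives that $R_F^{-1}(I_n)$ is an invariant on LR. Since any superset of an invariant is again an invariant ($\mathit{reach}(S_1) \subseteq R_F^{-1}(I_n) \subseteq I_{lr}$), it suffices to prove the inclusion $R_F^{-1}(I_n) \subseteq I_{lr}$.

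For the inclusion, take an LR state $\sigma_{lr} \in R_F^{-1}(I_n)$. By definition of the inverse image there is an NDC state $\tau_n$ with $R_F(\sigma_{lr},\tau_n)$ and $I_n(\tau_n)$. Now fix $v \le L+R$.
\begin{itemize}
\item By (F1), $L+R = |C|$ in $\tau_n$, so $v \le |C|$.
\item By $I_n(\tau_n)$, this gives $v \in \mathit{old}_n$.
\item By the history clause (F4), $\mathit{old}_n \subseteq \mathit{old}_{lr}$, so $v \in \mathit{old}_{lr}$.
\end{itemize}
Hence $I_{lr}(\sigma_{lr})$ holds.

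The forward step loses nothing, unlike the backward step. Clause (F1) is an equality, not the interval of (B1), so the bound $|C|$ is carried across exactly as $L+R$. No weakening of the count occurs.

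The only real concern is not in the transfer argument but in the premise that $R_F$ is a forward simulation in the sense of Definition~\ref{def:fs}, now with the added clause (F4). I take that as established in Section~\ref{sec:background}, together with the remark that (F4) is harmless. In particular, under Proposition~\ref{prop:struct} and with $P=\emptyset$ at related states, (F4) follows from (F1). I will not re-verify it here; if needed, that remark is where one would check that the start condition and each matched step preserve (F4).
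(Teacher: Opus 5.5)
Your proof is correct and follows the paper's argument exactly. Corollary~\ref{cor:In} and Theorem~\ref{thm:invforward} make $R_F^{-1}(I_n)$ an invariant on LR, and chaining (F1), $I_n$, and (F4) gives $R_F^{-1}(I_n)\subseteq I_{lr}$. The premise you defer, that $R_F$ including clause (F4) is a forward simulation, is discharged in the paper in Appendix~\ref{app:verifyforward} via the Lisbon-triple characterization of Theorem~\ref{thm:forwardAsLisbons}, rather than in Section~\ref{sec:background}.
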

 
\begin{proof}
    It suffices to show $R_F^{-1}(I_n) \subseteq I_{lr}$. Let $\sigma_{lr} \in R_F^{-1}(I_n)$, witnessed by an NDC state $\sigma_n$ with $R_F(\sigma_{lr},\sigma_n)$ and $I_n(\sigma_n)$, and let $v \le L+R$. By (F1), $L+R = |C|$, so $v \le |C|$; by $I_n(\sigma_n)$, $v \in \mathit{old}_n$; by the history clause (F4), $\mathit{old}_n \subseteq \mathit{old}_{lr}$. Hence $v \in \mathit{old}_{lr}$.
\end{proof}
 
Unlike Step~1, this step loses nothing: (F1) is an equality, $L+R = |C|$, so the bound transfers at full strength with no existential slack.
 
\paragraph{Summary.}
A single Hoare invariant on the trivially correct sequential counter (Proposition~\ref{prop:Is}) is transported, by Corollary~\ref{cor:In} and then Theorem~\ref{thm:invforward}, to the containment invariant $\forall v \le L+R.\ v \in \mathit{old}_{lr}$ on the LR implementation (Corollary~\ref{cor:Ilr}). The same containment is, of course, immediate from the LR update rule (Proposition~\ref{prop:struct}); the content of the result is that it is reached purely by transfer -- through the relation clauses (B1), (B2), (F1), (F4) -- with no reasoning about the LR machine's own dynamics. The backward leg weakens the predicate to the floor of the (B1) interval, while the forward leg, resting on the equality (F1), carries it across exactly. Finally, since an LR read reports $v_L(id)+R \le L+R$, the containment $I_{lr}$ gives $o_{lr}(id) \in \mathit{old}_{lr}$: every completed read returns a value the counter realized.

%% file: arxivvaliditysim.tex
We are using forward and backward simulations to relate the implementation to the specification. In order to verify that a relation is a forward simulation or a backward simulation, functional program logic can be used. For verifying that a relation is a forward simulation, Hoare triples can be used as in Theorem \ref{thm:forwardAsHoare} under the assumption that the specification program is total and deterministic. 

\begin{theorem}
\label{thm:forwardAsHoare}
    If $S_2$ is total and deterministic, then \(\hoare{R(\sigma,\tau)}{S_2;S_1;}{R(\sigma,\tau)}\) is a valid Hoare triple if and only if R is a forward simulation.
\end{theorem}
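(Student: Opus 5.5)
The plan is to read the relational program $S_2;S_1$ as a single program acting on the product state space $\Sigma \times T$. Its transition relation relates $(\sigma,\tau)$ to $(\sigma',\tau')$ exactly when $S_1(\sigma,\sigma')$ and $S_2(\tau,\tau')$, so that each coordinate runs independently. Under this reading, Corollary~\ref{cor:hoaredef} turns validity of $\hoare{R(\sigma,\tau)}{S_2;S_1;}{R(\sigma,\tau)}$ into the quantifier-alternation-free statement
\[
\forall \sigma,\sigma',\tau,\tau'.\; R(\sigma,\tau)\wedge S_1(\sigma,\sigma')\wedge S_2(\tau,\tau') \implies R(\sigma',\tau').
\]
On the other side, the step condition of Definition~\ref{def:fs} is the Forward Relational Triple $|R\rangle\,S_1;S_2\,|R\rangle$ of Definition~\ref{def:frt}:
\[
\forall \sigma,\sigma',\tau.\; S_1(\sigma,\sigma')\wedge R(\sigma,\tau)\implies \exists\tau'.\; S_2(\tau,\tau')\wedge R(\sigma',\tau').
\]
The proof then compares these two formulas, one direction at a time, with each atomic coordinated step pair playing the role of $S_1,S_2$.

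For the direction from the Hoare triple to the forward simulation, I would fix $\sigma,\sigma',\tau$ with $S_1(\sigma,\sigma')$ and $R(\sigma,\tau)$. Totality of $S_2$ supplies some $\tau'$ with $S_2(\tau,\tau')$. The Hoare triple then yields $R(\sigma',\tau')$, and this $\tau'$ is the required witness.

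For the converse, I would fix $\sigma,\sigma',\tau,\tau'$ with $R(\sigma,\tau)$, $S_1(\sigma,\sigma')$ and $S_2(\tau,\tau')$. The forward simulation gives some $\tau''$ with $S_2(\tau,\tau'')$ and $R(\sigma',\tau'')$. Determinism of $S_2$ forces $\tau''=\tau'$, so $R(\sigma',\tau')$ holds. Totality and determinism therefore enter exactly once each, one per direction, in line with the remark after the functionality conditions that they pin the existential witness uniquely.

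The main obstacle is not the logic but the bookkeeping of the statement. First, I have to fix the product semantics of $S_2;S_1$: it is a synchronous pairing, not sequential composition on a shared state. Second, the start condition $R[\mathit{start}(S_1)]\cap\mathit{start}(S_2)\neq\emptyset$ of Definition~\ref{def:fs} is invisible to a Hoare triple about steps. I would therefore state explicitly that the equivalence concerns the step clause, with the initiation clause checked separately or assumed on both sides. I would also make sure that ``total'' and ``deterministic'' are imposed on each atomic coordinated step $C_2$ of $S_2$, not merely on $S_2$ as a whole, since that is where the argument uses them.
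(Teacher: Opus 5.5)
Your proposal is correct and follows the paper's proof essentially step for step. Both read $S_2;S_1$ as the synchronous product transition, use Corollary~\ref{cor:hoaredef} to obtain the alternation-free formula, and use totality for the Hoare-to-simulation direction and determinism (forcing $\tau''=\tau'$) for the converse. Your caveat about the start condition is well taken: the paper's proof silently treats only the step clause of Definition~\ref{def:fs} as ``the forward simulation condition''. For example, $R=\emptyset$ makes the triple valid without being a forward simulation, so the equivalence genuinely holds only with initiation checked separately.
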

\begin{proof}
    Let $T$ be the product transition $T\big((\sigma,\tau),(\sigma',\tau')\big)\equiv S_1(\sigma,\sigma')\wedge S_2(\tau,\tau')$. By Corollary~1, the Hoare triple $\{R\}\,T\,\{R\}$ is valid iff
    \[ (\star)\quad \forall\sigma,\sigma',\tau,\tau'.\; R(\sigma,\tau)\wedge S_1(\sigma,\sigma')\wedge S_2(\tau,\tau') \;\Rightarrow\; R(\sigma',\tau'). \]
    The forward simulation condition (Definition~10) is
    \[ (\dagger)\quad \forall\sigma,\sigma',\tau.\; R(\sigma,\tau)\wedge S_1(\sigma,\sigma') \;\Rightarrow\; \exists\tau'.\; S_2(\tau,\tau')\wedge R(\sigma',\tau'). \]
     
    $(\dagger\Rightarrow\star)$ Assume $(\dagger)$ and the antecedent of $(\star)$. From $R(\sigma,\tau)\wedge S_1(\sigma,\sigma')$, $(\dagger)$ gives $\tau''$ with $S_2(\tau,\tau'')\wedge R(\sigma',\tau'')$. As $S_2$ is deterministic and $S_2(\tau,\tau')$ holds, $\tau''=\tau'$, hence $R(\sigma',\tau')$.
     
    $(\star\Rightarrow\dagger)$ Assume $(\star)$ and $R(\sigma,\tau)\wedge S_1(\sigma,\sigma')$. As $S_2$ is total there is $\tau'$ with $S_2(\tau,\tau')$. Then $(\star)$ gives $R(\sigma',\tau')$, so $\exists\tau'.\,S_2(\tau,\tau')\wedge R(\sigma',\tau')$.
\end{proof}

However, the specification being total and deterministic cannot always be guaranteed. In such a case, Lisbon triples can be used to verify that a relation is a forward simulation as in Theorem \ref{thm:forwardAsLisbon}, or alternatively as in Theorem \ref{thm:forwardAsLisbons}.

\begin{theorem}
\label{thm:forwardAsLisbon}
    $\lisbon{R(\sigma,\tau) \wedge S_1(\sigma,\sigma')}{S_2(\tau,\tau')}{R(\sigma',\tau')}$ is a valid Lisbon triple $\iff$ R is a forward simulation.
\end{theorem}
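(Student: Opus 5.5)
The plan is to unfold both sides into first-order formulas and observe that they coincide syntactically, once we fix how the Lisbon triple is read. The precondition $R(\sigma,\tau)\wedge S_1(\sigma,\sigma')$ is a predicate on the pair of ``pre-states'' $(\sigma,\sigma')$ together with $\tau$. The program $S_2(\tau,\tau')$ acts only on the specification component and leaves the implementation components $\sigma,\sigma'$ unchanged, as a frame. The postcondition $R(\sigma',\tau')$ is a predicate on the resulting triple $(\sigma,\sigma',\tau')$. Concretely, we take the state space to be $\Sigma\times\Sigma\times T$ and the transition relation to be $\widehat{S_2}\big((\sigma,\sigma',\tau),(\sigma_1,\sigma_1',\tau')\big)\equiv \sigma_1=\sigma\wedge\sigma_1'=\sigma'\wedge S_2(\tau,\tau')$. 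I would state this interpretation explicitly at the start, since it is what makes the statement meaningful.

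Next we unfold the Lisbon definition (Definition of Valid Lisbon Triple). The triple is valid iff for all $(\sigma,\sigma',\tau)$ with $R(\sigma,\tau)\wedge S_1(\sigma,\sigma')$ there is a post-state $(\sigma_1,\sigma_1',\tau')$ with $\widehat{S_2}$ relating the two and $R(\sigma_1',\tau')$. The framing equalities eliminate $\sigma_1,\sigma_1'$ by substitution, so validity becomes
\[
\forall\sigma,\sigma',\tau.\; R(\sigma,\tau)\wedge S_1(\sigma,\sigma') \Rightarrow \exists\tau'.\; S_2(\tau,\tau')\wedge R(\sigma',\tau').
\]
This is exactly condition $(\dagger)$ from the proof of Theorem~\ref{thm:forwardAsHoare}. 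It is also the Forward Relational Triple $|R\rangle\, S_1;S_2\, |R\rangle$ of Definition~\ref{def:frt}, i.e.\ the step clause of Definition~\ref{def:fs}. Both directions of the biconditional then follow by reading this chain of equivalences forwards and backwards. No totality or determinism hypothesis is needed, which contrasts with Theorem~\ref{thm:forwardAsHoare}. The reason is that the existential witness $\tau'$ is exactly what the Lisbon (backward under-approximation) semantics supplies: $P\subseteq S^{-1}(Q)$.

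The main obstacle is the start condition. Definition~\ref{def:fs} also requires $R[\mathit{start}(S_1)]\cap\mathit{start}(S_2)\neq\emptyset$ (or the per-state version from Section~\ref{sec:forward}), and no single step triple can express this. I would handle it in the same way as Theorem~\ref{thm:forwardAsHoare} implicitly does: read ``$R$ is a forward simulation'' here as referring to the step condition, with the initiation clause as a separate, assumed side condition. If a self-contained formulation is preferred, that clause can itself be phrased as a Lisbon triple with the identity/havoc-into-start program. A secondary point to check is that quantifying over all coordinated atomic steps $C_1,C_2$ distributes over the equivalence pointwise. It does, because the unfolding above is uniform in $S_1,S_2$.
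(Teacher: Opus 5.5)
Your proposal is correct and follows the paper's proof. In both, you unfold Lisbon validity with $\sigma,\sigma'$ held fixed and read off the step condition $\forall\sigma,\sigma',\tau.\;R(\sigma,\tau)\wedge S_1(\sigma,\sigma')\Rightarrow\exists\tau'.\;S_2(\tau,\tau')\wedge R(\sigma',\tau')$. Your framed product state over $\Sigma\times\Sigma\times T$ is just an explicit encoding of the paper's treatment of $\sigma,\sigma'$ as free variables ranging over all valuations. Your caveat about the start clause is apt: the paper's proof also matches only the step condition of Definition~\ref{def:fs} and leaves initiation aside without comment.
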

\begin{proof}
    The program $S_2$ transforms the specification state $\tau$ into $\tau'$, and $\sigma,\sigma'$ are free. By the definition of a valid Lisbon triple (Definition~7), $\langle R(\sigma,\tau)\wedge S_1(\sigma,\sigma')\rangle\, S_2\,\langle R(\sigma',\tau')\rangle$ is valid iff
    \[ \forall\tau.\; \big(R(\sigma,\tau)\wedge S_1(\sigma,\sigma')\big) \;\Rightarrow\; \exists\tau'.\; R(\sigma',\tau')\wedge S_2(\tau,\tau'). \]
    A triple with free variables is valid iff it holds for all their valuations; taking the conjunction over $\sigma,\sigma'$ gives
    \[ \forall\sigma,\sigma',\tau.\; R(\sigma,\tau)\wedge S_1(\sigma,\sigma') \;\Rightarrow\; \exists\tau'.\; S_2(\tau,\tau')\wedge R(\sigma',\tau'), \]
    which is exactly Definition~10. (When $S_1(\sigma,\sigma')$ is false the premise fails and the triple holds vacuously, matching the vacuous case of the simulation condition.)
\end{proof}

\begin{theorem}
\label{thm:forwardAsLisbons}
    Let \(R_1 = \{\sigma | \exists \tau. R(\sigma, \tau)\}\) and \(R_1(\sigma) = \{\tau | \sigma \in R_1 \wedge R(\sigma,\tau)\}\). Then, \\
        \((\forall \sigma, \sigma'. S_1(\sigma,\sigma') \implies \lisbon{R_1(\sigma)}{S_2}{R_1(\sigma')} \text{ is a valid Lisbon triple}) \iff \text{ R is a forward simulation.}\)
\end{theorem}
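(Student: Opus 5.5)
The plan is to unfold both sides into first-order formulas and match them quantifier by quantifier, in the same way as the proof of Theorem~\ref{thm:forwardAsLisbon}. The only new ingredient is the indexed family of sets $R_1(\sigma)$, which is just the fibre of $R$ over $\sigma$.

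First we simplify the fibre. By definition,
\[
R_1(\sigma)=\{\tau \mid \sigma\in R_1 \wedge R(\sigma,\tau)\},
\]
and $R(\sigma,\tau)$ already implies $\exists\tau.\,R(\sigma,\tau)$, that is, $\sigma\in R_1$. Hence $R_1(\sigma)=\{\tau \mid R(\sigma,\tau)\}=R[\sigma]$. The side condition $\sigma\in R_1$ is therefore redundant. Its only effect is that $R_1(\sigma)=\emptyset$ for states outside the domain of $R$, and there the Lisbon triple holds vacuously. This matches the vacuous case of the simulation condition when no $\tau$ is related to $\sigma$.

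Next we unfold the Lisbon triple for fixed $\sigma,\sigma'$. By the definition of a valid Lisbon triple, $\lisbon{R_1(\sigma)}{S_2}{R_1(\sigma')}$ is valid iff
\[
\forall\tau.\; R(\sigma,\tau)\Rightarrow \exists\tau'.\; R(\sigma',\tau')\wedge S_2(\tau,\tau').
\]
Prefixing $\forall\sigma,\sigma'.\,S_1(\sigma,\sigma')\Rightarrow\cdot$ and currying the implications gives
\[
\forall\sigma,\sigma',\tau.\; S_1(\sigma,\sigma')\wedge R(\sigma,\tau)\Rightarrow\exists\tau'.\;S_2(\tau,\tau')\wedge R(\sigma',\tau').
\]
This is exactly the step condition of Definition~\ref{def:fs}, equivalently the forward relational triple $|R\rangle\,S_1;S_2\,|R\rangle$. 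Each rewriting is an equivalence, so both directions of the biconditional follow at once: ($\Rightarrow$) instantiate the hypothesis at $\sigma,\sigma'$ and $\tau$; ($\Leftarrow$) for each pair with $S_1(\sigma,\sigma')$, the simulation condition supplies the required $\tau'$ for every $\tau\in R_1(\sigma)$.

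The expected obstacle is the start condition. Read literally, the left-hand side of Theorem~\ref{thm:forwardAsLisbons} says nothing about initial states, whereas Definition~\ref{def:fs} requires $R[\mathit{start}(S_1)]\cap\mathit{start}(S_2)\neq\emptyset$. Two readings are available:
\begin{itemize}
\item \textbf{Step clause only.} As in Theorem~\ref{thm:forwardAsLisbon}, we treat ``forward simulation'' as shorthand for the step clause (the $\forall\exists$ relational triple) and state this explicitly. The proof above then suffices.
\item \textbf{Full definition.} We conjoin the initiation condition to both sides of the biconditional, where it passes through unchanged.
\end{itemize}
I would adopt the first reading, since it is the one Theorem~\ref{thm:forwardAsLisbon} used, and add a remark that initiation is a separate, quantifier-free check. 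A second, minor point is to justify treating $\sigma,\sigma'$ as external parameters of a family of Lisbon triples rather than as program variables; the argument is the same free-variable convention used in Theorem~\ref{thm:forwardAsLisbon}.
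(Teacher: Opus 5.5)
Your proposal is correct and follows the paper's proof essentially step for step. You identify $R_1(\sigma)$ with the fibre $\{\tau \mid R(\sigma,\tau)\}$, unfold the Lisbon triple for fixed $\sigma,\sigma'$, and curry the outer quantifiers to obtain the step condition of Definition~\ref{def:fs}. Your remark about the start condition is apt: the paper's proof also establishes only the step clause and silently counts that as ``forward simulation,'' which is exactly the reading you adopt.
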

\begin{proof}
    For $\sigma\in R_1$ we have $R_1(\sigma)=\{\tau\mid R(\sigma,\tau)\}$, and for $\sigma\notin R_1$ we have $R_1(\sigma)=\emptyset=\{\tau\mid R(\sigma,\tau)\}$; in both cases $R_1(\sigma)(\tau)\iff R(\sigma,\tau)$, and likewise for $\sigma'$. By Definition~7, for fixed $\sigma,\sigma'$ the triple $\langle R_1(\sigma)\rangle\,S_2\,\langle R_1(\sigma')\rangle$ is valid iff
    \[ \forall\tau.\; R(\sigma,\tau)\;\Rightarrow\; \exists\tau'.\; R(\sigma',\tau')\wedge S_2(\tau,\tau'). \]
    Hence the stated condition $\forall\sigma,\sigma'.\,S_1(\sigma,\sigma')\Rightarrow(\text{this triple is valid})$ is
    \[ \forall\sigma,\sigma',\tau.\; S_1(\sigma,\sigma')\wedge R(\sigma,\tau) \;\Rightarrow\; \exists\tau'.\; S_2(\tau,\tau')\wedge R(\sigma',\tau'), \]
    which is the forward simulation condition (Definition~10).
\end{proof}

The relation defined in \ref{sec:fw-rel} is verified as a forward simulation relation using Theorem \ref{thm:forwardAsLisbons}. For each implementation operation, the corresponding Lisbon triple is shown to be valid using the proof rules defined by \cite{Ascari23:sil}. See Appendix \ref{app:verifyforward} for the full proof.

Similarly, to verify that a relation is a backward simulation, Necessary Precondition triples can be used as in Theorem \ref{thm:backwardAsNecpre} under the assumption that the specification is backward total and backward deterministic.

\begin{theorem}
\label{thm:backwardAsNecpre}
    If $S_2$ is backward total and backward deterministic, then \(\necpre{R(\sigma,\tau)}{S_2;S_1;}{R(\sigma,\tau)}\) is a valid Necessary Preconditions triple $\iff$ R is a backward simulation.
\end{theorem}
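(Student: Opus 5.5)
The plan mirrors the proof of Theorem~\ref{thm:forwardAsHoare}, with time reversed. Let $T$ be the product transition $T\big((\sigma,\tau),(\sigma',\tau')\big)\equiv S_1(\sigma,\sigma')\wedge S_2(\tau,\tau')$. By Corollary~\ref{cor:necpredef}, the Necessary Preconditions triple $(R)\,T\,(R)$ is valid iff
\[ (\star)\quad \forall\sigma,\sigma',\tau,\tau'.\; R(\sigma',\tau')\wedge S_1(\sigma,\sigma')\wedge S_2(\tau,\tau') \;\Rightarrow\; R(\sigma,\tau). \]
The backward step condition (Definitions~\ref{def:brt} and~\ref{def:bs}) is
\[ (\dagger)\quad \forall\sigma,\sigma',\tau'.\; S_1(\sigma,\sigma')\wedge R(\sigma',\tau') \;\Rightarrow\; \exists\tau.\; S_2(\tau,\tau')\wedge R(\sigma,\tau). \]
I will prove $(\star)\iff(\dagger)$ using only the two hypotheses on $S_2$. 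Here the relevant direction is backward, from post-state to pre-state.

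For $(\dagger\Rightarrow\star)$, assume the antecedent of $(\star)$. Applying $(\dagger)$ to $S_1(\sigma,\sigma')$ and $R(\sigma',\tau')$ yields some $\tau''$ with $S_2(\tau'',\tau')$ and $R(\sigma,\tau'')$. We also have $S_2(\tau,\tau')$, so backward determinism gives $\tau''=\tau$, and hence $R(\sigma,\tau)$.

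For $(\star\Rightarrow\dagger)$, assume $S_1(\sigma,\sigma')$ and $R(\sigma',\tau')$. Backward totality supplies a $\tau$ with $S_2(\tau,\tau')$. Then $(\star)$ gives $R(\sigma,\tau)$, which is the required witness.

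The step I expect to need care is the bookkeeping around the definition of backward simulation, not the core equivalence. Definition~\ref{def:bs} also carries a start condition $R[\mathit{start}(S_1)]\subseteq\mathit{start}(S_2)$, and the Section~\ref{sec:backward} version also requires totality of $R$. Neither is expressible in the triple. I will therefore read ``$R$ is a backward simulation'' here as the step condition alone, exactly as Theorem~\ref{thm:forwardAsHoare} silently treats only the step condition of Definition~\ref{def:fs}. I would add a remark that the initiation and totality obligations must still be discharged separately. I would also state explicitly that the program written $S_2;S_1$ denotes the product transition $T$, which is the same convention used in the forward case.
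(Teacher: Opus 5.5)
Your proposal is correct and follows the paper's proof essentially verbatim. It uses the same product transition $T$ and the same reduction via Corollary~\ref{cor:necpredef} to $(\star)$, with backward determinism for $(\dagger\Rightarrow\star)$ and backward totality for $(\star\Rightarrow\dagger)$. Your caveat is also fair: the triple captures only the step condition, so the start condition and the totality of $R$ must be discharged separately, a point the paper leaves implicit here and in Theorem~\ref{thm:forwardAsHoare}.
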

\begin{proof}
    Let $T$ be the product transition as in the proof of Theorem~8. By Corollary~2, the Necessary Preconditions triple $(R)\,T\,(R)$ is valid iff
    \[ (\star)\quad \forall\sigma,\sigma',\tau,\tau'.\; R(\sigma',\tau')\wedge S_1(\sigma,\sigma')\wedge S_2(\tau,\tau') \;\Rightarrow\; R(\sigma,\tau). \]
    The backward simulation condition (Definition~11) is
    \[ (\dagger)\quad \forall\sigma,\sigma',\tau'.\; S_1(\sigma,\sigma')\wedge R(\sigma',\tau') \;\Rightarrow\; \exists\tau.\; S_2(\tau,\tau')\wedge R(\sigma,\tau). \]
     
    $(\dagger\Rightarrow\star)$ Assume $(\dagger)$ and the antecedent of $(\star)$. From $S_1(\sigma,\sigma')\wedge R(\sigma',\tau')$, $(\dagger)$ gives $\tau''$ with $S_2(\tau'',\tau')\wedge R(\sigma,\tau'')$. As $S_2$ is backward deterministic and $S_2(\tau,\tau')$ holds, $\tau''=\tau$, hence $R(\sigma,\tau)$.
     
    $(\star\Rightarrow\dagger)$ Assume $(\star)$ and $S_1(\sigma,\sigma')\wedge R(\sigma',\tau')$. As $S_2$ is backward total there is $\tau$ with $S_2(\tau,\tau')$. Then $(\star)$ gives $R(\sigma,\tau)$, so $\exists\tau.\,S_2(\tau,\tau')\wedge R(\sigma,\tau)$.
\end{proof}

If the specification is not backward total or backward deterministic, Incorrectness triples can be used to verify that the relation is a backward simulation as in Theorem \ref{thm:backwardAsIncorrectness}, or alternatively as in Theorem \ref{thm:backwardAsIncorrectnesses}.

\begin{theorem}
\label{thm:backwardAsIncorrectness}
    $\incre{R(\sigma,\tau)}{S_2(\tau,\tau')}{R(\sigma',\tau') \wedge S_1(\sigma,\sigma')}$ is a valid Incorrectness triple $\iff$ R is a backward simulation.
\end{theorem}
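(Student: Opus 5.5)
We unfold both sides to first-order formulas and compare them, exactly as in the proof of Theorem~\ref{thm:forwardAsLisbon}. The program in the triple is $S_2$, transforming the specification state. The variables $\tau$ (pre-state) and $\tau'$ (post-state) are bound by the program. The implementation states $\sigma,\sigma'$ occur free in the pre- and postconditions. The reading is that $\sigma,\sigma'$ are universally quantified over the whole triple: the triple is valid iff it is valid for every valuation of them.

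The first step is to instantiate the definition of a valid Incorrectness triple. For fixed $\sigma,\sigma'$, the triple $\incre{R(\sigma,\tau)}{S_2}{R(\sigma',\tau')\wedge S_1(\sigma,\sigma')}$ is valid iff
\[
\forall \tau'.\; R(\sigma',\tau')\wedge S_1(\sigma,\sigma') \;\Rightarrow\; \exists \tau.\; R(\sigma,\tau)\wedge S_2(\tau,\tau').
\]
Here the postcondition plays the role of $Q$ and is required to lie inside the image of the precondition set $\{\tau \mid R(\sigma,\tau)\}$ under $S_2$.

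The second step closes over the free variables. Quantifying universally over $\sigma,\sigma'$ and reordering the conjuncts gives
\[
\forall \sigma,\sigma',\tau'.\; S_1(\sigma,\sigma')\wedge R(\sigma',\tau') \;\Rightarrow\; \exists \tau.\; S_2(\tau,\tau')\wedge R(\sigma,\tau).
\]
This is literally the step condition of a backward simulation, namely Definition~\ref{def:brt} with $P=Q=R$, which is $(\dagger)$ in the proof of Theorem~\ref{thm:backwardAsNecpre}. Both directions of the equivalence therefore follow from this chain of definitional rewrites. We also note the vacuous case: when $S_1(\sigma,\sigma')$ is false, the postcondition is empty, so the triple holds trivially, matching the trivially satisfied simulation condition.

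The main obstacle is scoping rather than computation. The side conditions of Definition~\ref{def:bs} are the start condition $R[start(S_1)]\subseteq start(S_2)$ and, in the automaton formulation, totality of $R$. Neither is visible in a single triple about the step relation. So, as Theorem~\ref{thm:forwardAsLisbon} does for forward simulations, we would state explicitly that the equivalence concerns the step clause of a backward simulation, with the start condition and totality assumed or checked separately. A second point needs justifying: the free-variable convention, under which validity of an open triple means validity for all valuations of $\sigma,\sigma'$. We would also remark why no backward determinism or backward totality hypothesis on $S_2$ is needed, in contrast with Theorem~\ref{thm:backwardAsNecpre}. The under-approximate Incorrectness semantics already supplies the existential witness $\tau$ itself, so it never has to be pinned down by uniqueness.
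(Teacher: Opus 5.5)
Your proposal is correct and follows the paper's proof: you instantiate the definition of a valid Incorrectness triple for fixed $\sigma,\sigma'$, then close universally over the free implementation states, which yields exactly the backward step condition of Definition~\ref{def:bs}. Your caveat is also accurate: the triple captures neither the start condition nor totality of $R$, and the paper's proof likewise only establishes equivalence with the step clause without saying so.
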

\begin{proof}
    The program $S_2$ transforms $\tau$ into $\tau'$, and $\sigma,\sigma'$ are free. By the definition of a valid Incorrectness triple (Definition~6), $[R(\sigma,\tau)]\,S_2\,[R(\sigma',\tau')\wedge S_1(\sigma,\sigma')]$ is valid iff
    \[ \forall\tau'.\; \big(R(\sigma',\tau')\wedge S_1(\sigma,\sigma')\big) \;\Rightarrow\; \exists\tau.\; R(\sigma,\tau)\wedge S_2(\tau,\tau'). \]
    Holding for all valuations of the free $\sigma,\sigma'$, this is
    \[ \forall\sigma,\sigma',\tau'.\; R(\sigma',\tau')\wedge S_1(\sigma,\sigma') \;\Rightarrow\; \exists\tau.\; S_2(\tau,\tau')\wedge R(\sigma,\tau), \]
    which is the backward simulation condition (Definition~11).
\end{proof}

\begin{theorem}
\label{thm:backwardAsIncorrectnesses}
    Let \(R_1 = \{\sigma | \exists \tau. R(\sigma, \tau)\}\) and \(R_1(\sigma) = \{\tau | \sigma \in R_1 \wedge R(\sigma,\tau)\}\). Then, \\
        \((\forall \sigma, \sigma'. S_1(\sigma,\sigma') \implies \incre{R_1(\sigma)}{S_2}{R_1(\sigma')} \text{ is a valid Incorrectness triple}) \iff \text{ R is a backward simulation.}\)
\end{theorem}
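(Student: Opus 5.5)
The proof will mirror that of Theorem~\ref{thm:forwardAsLisbons}, with the Incorrectness triple replacing the Lisbon triple and the roles of pre- and post-states exchanged.

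\emph{Normalizing the sections.} The first step is to show that the partial sections coincide with the relation. If $\sigma\in R_1$, then by definition $R_1(\sigma)=\{\tau\mid R(\sigma,\tau)\}$. If $\sigma\notin R_1$, then no $\tau$ satisfies $R(\sigma,\tau)$, so again $R_1(\sigma)=\emptyset=\{\tau\mid R(\sigma,\tau)\}$. Hence for every $\sigma$ and $\tau$ we have $R_1(\sigma)(\tau)\iff R(\sigma,\tau)$, and the same holds for $\sigma'$.

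\emph{Unfolding the triple.} Fix $\sigma,\sigma'$. By the definition of a valid Incorrectness triple (Definition~6), $[R_1(\sigma)]\,S_2\,[R_1(\sigma')]$ is valid iff
\[ \forall\tau'.\; R(\sigma',\tau')\;\Rightarrow\;\exists\tau.\; R(\sigma,\tau)\wedge S_2(\tau,\tau'). \]

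\emph{Quantifying over implementation steps.} The hypothesis of the theorem is that this holds whenever $S_1(\sigma,\sigma')$. Because $\sigma,\sigma'$ do not occur in the inner universal quantifier, the condition ``for all $\sigma,\sigma'$, $S_1(\sigma,\sigma')$ implies the triple is valid'' is equivalent, by pulling $\forall\tau'$ outward and currying, to
\[ \forall\sigma,\sigma',\tau'.\; S_1(\sigma,\sigma')\wedge R(\sigma',\tau')\;\Rightarrow\;\exists\tau.\; S_2(\tau,\tau')\wedge R(\sigma,\tau). \]
This is exactly the step condition of a backward simulation (Definition~11), so both directions of the biconditional follow at once.

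\emph{Main obstacle.} The only delicate point is what ``backward simulation'' includes. Definition~11 also imposes the start condition $R[start(S_1)]\subseteq start(S_2)$, and the Lynch--Vaandrager notion additionally requires totality. Neither is expressible by the step triples, so the equivalence holds for the step clause only. I will state this explicitly, as the analogous Theorems~\ref{thm:backwardAsIncorrectness} and~\ref{thm:forwardAsLisbons} implicitly do, treating the start and totality conditions as side obligations checked separately. With that reading, the argument is the purely logical rewriting above.
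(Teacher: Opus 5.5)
Your proposal is correct and follows the paper's proof essentially verbatim: you identify $R_1(\sigma)$ with $\{\tau \mid R(\sigma,\tau)\}$, unfold Definition~6 for fixed $\sigma,\sigma'$, and pull $\forall\tau'$ out past the hypothesis to obtain the step clause of Definition~11. That last move is justified because $\tau'$ does not occur free in $S_1(\sigma,\sigma')$, which is what your sentence about $\sigma,\sigma'$ not occurring in the inner quantifier should say. Your caveat is also accurate: the paper's proof likewise establishes only the step clause, and silently leaves the start condition (and the totality used in the proof of Theorem~\ref{thm:invbackward}) as separate obligations.
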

\begin{proof}
    As in the proof of Theorem~10, $R_1(\sigma)(\tau)\iff R(\sigma,\tau)$ and $R_1(\sigma')(\tau')\iff R(\sigma',\tau')$. By Definition~6, for fixed $\sigma,\sigma'$ the triple $[R_1(\sigma)]\,S_2\,[R_1(\sigma')]$ is valid iff
    \[ \forall\tau'.\; R(\sigma',\tau')\;\Rightarrow\; \exists\tau.\; R(\sigma,\tau)\wedge S_2(\tau,\tau'). \]
    Hence $\forall\sigma,\sigma'.\,S_1(\sigma,\sigma')\Rightarrow(\text{this triple is valid})$ is
    \[ \forall\sigma,\sigma',\tau'.\; S_1(\sigma,\sigma')\wedge R(\sigma',\tau') \;\Rightarrow\; \exists\tau.\; S_2(\tau,\tau')\wedge R(\sigma,\tau), \]
    the backward simulation condition (Definition~11).
\end{proof}

The relation defined in \ref{sec:bw-rel} is verified as a backward simulation relation using Theorem \ref{thm:backwardAsIncorrectnesses}. For each implementation operation, the corresponding Incorrectness triple is shown to be valid using the proof rules defined by \cite{OHearn2019:incorrectness}. See Appendix \ref{app:verifybackward} for the full proof.

%% file: arxivconclusions.tex
We have related axiomatic functional verification logic to refinement proofs through simulation relations. On the transfer side, a forward simulation carries a Hoare invariant of the specification to a Hoare invariant of the implementation, and forward and backward simulations carry ordinary invariants, so a safety property proved once on a simple specification descends to the implementation as a state predicate. On the discharge side, we characterized forward simulations by the validity of Hoare and Lisbon triples and backward simulations by the validity of Necessary Preconditions and Incorrectness triples, reducing the simulation obligation to a triple in a standard functional logic.
 
We illustrated the framework on a concurrent counter. Linking a Left--Right implementation to an atomic sequential specification through an intermediate nondeterministic concurrent counter, a forward simulation from LR counter to NDC counter, and a backward simulation from the NDC counter to the sequential counter, we transferred the counter's safety bound down the chain without reasoning directly about the implementation's interleavings. The two simulations were verified operation by operation using Lisbon and Incorrectness triples (Appendices~\ref{app:verifyforward} and~\ref{app:verifybackward}).
 
We aim to extend the work in two directions. The first concerns the reach of the transfer results. We have shown that forward simulations preserve Hoare invariants and that backward simulations preserve invariants obtained by backward reasoning; it remains open which properties expressible in Incorrectness logic, Lisbon logic, and the Necessary Preconditions fragments are preserved under forward and backward simulations, and under what structural conditions. A systematic map from (logic, simulation direction) pairs to the class of preserved properties is the main goal. The second is to test the discharge method on implementations whose linearization points are genuinely value dependent, richer concurrent data structures than the counter, to see how far the operation by operation triple verification scales. Mechanizing the transfer theorems and the worked example in a proof assistant such as Lean~4 would additionally provide machine checked guarantees.

%% file: arxivappendix.tex
\section{Verification of Simulation Relations}
\label{app:proofs}

\subsection{Forward Simulation Verification}
\label{app:verifyforward}

By Theorem \ref{thm:forwardAsLisbons}, it suffices to show that for every implementation step $(\sigma, \sigma')$ of $\mathit{Ai}$ and every $\tau$ with $R(\sigma, \tau)$, the Lisbon triple $\langle R(\sigma, \tau) \rangle\ c_1;\ldots;c_n\ \langle R(\sigma', \tau') \rangle$ is valid for the corresponding sequence of specification steps. We derive each triple with the SIL proof rules. The following lemma is used in the increment case.

\begin{lemma}
\label{lem:noP}
    If $R(\sigma,\tau)$ then $P=\emptyset$.
\end{lemma}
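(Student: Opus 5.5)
The plan is to argue by contradiction, using only the interface clauses (F5) and (F6) together with two structural facts about how program counters and pending sets move. Suppose $R(\sigma,\tau)$ holds and some $id \in P$. I will show that $\mathit{pc}_n(id)$ must be both \textsc{mid} and \textsc{done}.

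First I record two one-step structural invariants, proved in the style of Proposition~\ref{prop:struct}.
\begin{itemize}
\item[(a)] On NDC, $id \in P \iff \mathit{pc}_n(id) = \textsc{mid}$ for every $id \in \mathcal{I}_I$. Initially $P=\emptyset$ and every counter is \textsc{init}. \textsf{begin\_increment} adds $id$ to $P$ exactly as it sets $\mathit{pc}_n(id)$ to \textsc{mid}. \textsf{return\_increment} removes $id$ exactly as it sets $\mathit{pc}_n(id)$ to \textsc{done}. The read operations touch neither $P$ nor increment counters. Also $P \subseteq \mathcal{I}_I$ holds trivially.
\item[(b)] On LR, $\mathit{pc}_{\mathit{lr}}(id) \in \{\textsc{init},\textsc{done}\}$ for every $id \in \mathcal{I}_I$. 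The only LR operation that changes an increment's counter is \textsf{increment}, and it moves it directly from \textsc{init} to \textsc{done}.
\end{itemize}

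With these in hand the argument is short. From $id \in P$ and (a), $\mathit{pc}_n(id) = \textsc{mid} \neq \textsc{init}$. Clause (F5) then gives $\mathit{pc}_{\mathit{lr}}(id) \neq \textsc{init}$, so by (b) we have $\mathit{pc}_{\mathit{lr}}(id) = \textsc{done}$. Clause (F6) then yields $\mathit{pc}_n(id) = \textsc{done}$, which contradicts $\mathit{pc}_n(id) = \textsc{mid}$. Hence $P = \emptyset$.

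The main obstacle is that (a) and (b) are not clauses of $R$ as stated (F1)--(F6). They hold only at reachable states, not at arbitrary related pairs. Taken literally over all of $\Sigma \times T$, the lemma would fail: one can take a pair with a garbage $P$ whose counter is \textsc{done}. I would therefore make the intended reading explicit in one of two ways. The first is to state the lemma for pairs of reachable states. This suffices for the simulation check, since the triples of Theorem~\ref{thm:forwardAsLisbons} need only be discharged at states reached in the lockstep construction of Lemma~\ref{lem:reachforward}. The second, which I prefer because it keeps the check purely local, is to conjoin (a) and (b) to $R$ as additional structural clauses. Their preservation is exactly the one-step induction above, so the remaining Lisbon triples are unaffected apart from carrying these clauses through each case.
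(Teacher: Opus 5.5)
Your argument is correct and is exactly the chain the paper uses, whose one-line proof is garbled by typos but intends the same steps: $id\in P$ gives $\mathit{pc}_n(id)=\textsc{mid}\neq\textsc{init}$, (F5) and the atomicity of LR increments give $\mathit{pc}_{\mathit{lr}}(id)=\textsc{done}$, and (F6) then contradicts $\mathit{pc}_n(id)=\textsc{mid}$. Your caveat is also well founded, since the paper uses (a) and (b) tacitly and the lemma fails for arbitrary $R$-related pairs; note only that (b) already follows from (F2) together with $A\subseteq\mathcal{I}_R$, so (a) is the one fact that genuinely needs reachability or a strengthened $R$.
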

\begin{proof}
    If $id\in P$ then $\mathit{pc}_n(id)=\textsc{done}\neq\textsc{done}$, so (F5) gives $\mathit{pc}_{lr}(id)\neq\textsc{done}$; as LR increments are atomic, $\mathit{pc}_{lr}(id)=\textsc{done}$, and then F~6 forces $\mathit{pc}_n(id)=\textsc{done}$, contradicting $\textsc{done}$.
\end{proof}

\noindent\textbf{Case 1:} $\textsf{increment}(\mathit{id})$.
 
\smallskip
\noindent\emph{Implementation step:}

Assumes $pc_{lr}(\mathit{id}) = \textsc{init}$; nondeterministically sets either $L \leftarrow L{+}1$ or $R \leftarrow R{+}1$; inserts the new $L+R$ into $old_{lr}$; sets $pc_{lr}(\mathit{id}) \leftarrow \textsc{done}$. Without loss of generality take $L \leftarrow L{+}1$ (the case $R \leftarrow R{+}1$ is symmetric).
 
\noindent\emph{Corresponding specification steps:}

$c_1 = \textsf{begin\_increment}(\mathit{id})$;
$c_2 = \textsf{return\_increment}(\mathit{id})$.
 
\noindent\emph{Precondition and postcondition:}
\begin{align*}
P &= ((|C| = L + R) \\
&\quad\;\land\; (\forall id \in \mathbb{I}. \ pc_{lr}(id) = \textsc{mid} \iff id \in A) \\
&\quad\;\land\; (\forall id \in A. \ |MS(id)| \le v_L(id) + R \le |MS(id) \cup YS(id)|) \\
&\quad\;\land\; (old_n \subseteq old_{lr}) \\
&\quad\;\land\; (\forall id \in \mathbb{I}. \ pc_n(id) \neq INIT \implies pc_{lr}(id) \neq INIT \wedge i_{lr}(id) = i_n(id)) \\
&\quad\;\land\; (\forall id \in \mathbb{I}. \ pc_{lr}(id) = DONE \implies pc_n(id) = DONE \wedge o_{lr}(id) = o_n(id))
)\\[2pt]
Q &= ((|C'| = L{+}R{+}1) \\
  &\quad\;\land\; (\forall \mathit{id}' \in I.\ pc_{lr}'(\mathit{id}') = \textsc{mid} \iff \mathit{id}' \in A')\\
  &\quad\;\land\; (\forall \mathit{id}' \in A'.\ |MS'(\mathit{id}')| \leq v_L(\mathit{id}') + R \leq |MS'(\mathit{id}') \cup YS'(\mathit{id}')|)\\
  &\quad\;\land\; (old_n' \subseteq old_{lr}')\\
  &\quad\;\land\; (\forall \mathit{id}' \in I.\ pc_n'(\mathit{id}') \neq \textsc{init} \implies pc_{lr}'(\mathit{id}') \neq \textsc{init} \land i_{lr}'(\mathit{id}') = \mathit{i}_n'(\mathit{id}'))\\
  &\quad\;\land\; (\forall \mathit{id}' \in I.\ pc_{lr}'(\mathit{id}') = \textsc{done} \implies pc_n'(\mathit{id}') = \textsc{done} \land o_{lr}'(\mathit{id}') = \mathit{o}_n'(\mathit{id}'))
\end{align*}
 
\noindent\emph{Proof of Lisbon triple $\langle P \rangle\ c_1; c_2\ \langle Q \rangle$.}
 
\noindent Define the intermediate condition:
\begin{align*}
Q_1 &= (|C \cup \{\mathit{id}\}| = L{+}R{+}1) \\
  &\quad\;\land\; (\forall \mathit{id}' \in I.\ pc_{lr}(\mathit{id}') = \textsc{mid} \iff \mathit{id}' \in A)\\
  &\quad\;\land\; (\forall \mathit{id}' \in A.\ |MS(\mathit{id}')| \leq v_L(\mathit{id}') + R \leq |MS(\mathit{id}') \cup YS(\mathit{id}')|)\\
  &\quad\;\land\; (old_n \subseteq old_{lr}') \\
  &\quad\;\land\; (\forall \mathit{id}' \in I.\ pc_n(\mathit{id}') \neq \textsc{init} \implies pc_{lr}(\mathit{id}') \neq \textsc{init} \land i_{lr}(\mathit{id}') = \mathit{i}_n(\mathit{id}'))\\
  &\quad\;\land\; (\forall \mathit{id}' \in I.\ pc_{lr}(\mathit{id'}) = \textsc{done} \implies pc_n(\mathit{id'}) = \textsc{done} \land o_{lr}(\mathit{id'}) = \mathit{o}_n(\mathit{id'}))
\end{align*}
and the pre-intermediate condition:
\begin{align*}
Q_0 &= (|C \cup \{\mathit{id}\}| = L{+}R{+}1) \\
  &\quad\;\land\; (\forall \mathit{id}' \in I.\ pc_{lr}(\mathit{id}') = \textsc{mid} \iff \mathit{id}' \in A)\\
  &\quad\;\land\; (\forall \mathit{id}' \in A.\ |MS(\mathit{id}')| \leq v_L(\mathit{id}') + R \leq |MS(\mathit{id}') \cup YS(\mathit{id}') \cup \{\mathit{id}\}|)\\
  &\quad\;\land\; (old_n \subseteq old_{lr}') \\
  &\quad\;\land\; (\forall \mathit{id}' \in I.\ pc_n(\mathit{id}') \neq \textsc{init} \implies pc_{lr}(\mathit{id}') \neq \textsc{init} \land i_{lr}(\mathit{id}') = \mathit{i}_n(\mathit{id}'))\\
  &\quad\;\land\; (\forall \mathit{id}' \in I.\ pc_{lr}(\mathit{id'}) = \textsc{done} \implies pc_n(\mathit{id'}) = \textsc{done} \land o_{lr}(\mathit{id'}) = \mathit{o}_n(\mathit{id'}))
\end{align*}
 
By the \emph{atom} rule, $\langle Q_1 \rangle\ c_2\ \langle Q \rangle$ is valid: $c_2$ moves $\mathit{id}$ from $P$ to $C$ and sets $pc_n(\mathit{id}) \leftarrow \textsc{done}$, so $|C \cup \{\mathit{id}\}|$ becomes $|C'|$, (F2) is preserved since $pc_{lr}(\mathit{id})$ moved to $\textsc{done}$, (F3) and (F4) are unchanged, and F~6 is satisfied by the $pc_n$ update.
 
By the \emph{atom} rule, $\langle Q_0 \rangle\ c_1\ \langle Q_1 \rangle$ is valid: $c_1$ adds $\mathit{id}$ to $P$ (so $|C \cup \{\mathit{id}\}| = L{+}R{+}1$ by (F1)) and updates $YS(t) \leftarrow YS(t) \cup \{\mathit{id}\}$ for all $t \in A$, which relaxes the upper bound in (F3) from $Q_0$ to $Q_1$, adds $|C \cup \{\mathit{id}\}|$ (since $P=\emptyset$ before $c_1$) to $old_n$ where (F4) is satisfied since $old_{lr}'$ also contains $L+R+1$, and sets $pc_n(\mathit{id}) \leftarrow \textsc{mid}$ satisfying (F5) since $pc_{lr}(\mathit{id}) = \textsc{done} \neq \textsc{init}$.
 
By the \emph{seq} rule, $\langle Q_0 \rangle\ c_1; c_2\ \langle Q \rangle$ is valid.
 
Finally, $P \subseteq Q_0$: From (F1) in $P$, $|C|=L{+}R$, and $id\notin C$ (else $\mathit{pc}_n(id)=\textsc{done}$ would give $\mathit{pc}_{lr}(id)\neq \textsc{init}$ by (F5)), so $|C\cup\{id\}|=L{+}R{+}1$. (F2) and (F5) agree between $P$ and $Q_0$ once $\mathit{pc}_{lr}(id)$ is read as $\textsc{done}$ ($id$ is neither $\textsc{mid}$ nor in $A$ on either side). The F3 bound of $Q_0$ is weaker than that of $P$ (its upper bound has the extra element $id$). For (F4), $\mathit{old}_n\subseteq\mathit{old}_{lr}\subseteq \mathit{old}_{lr}'$. The $id'\neq id$ part of F6 is inherited from $P$. Hence $P\subseteq Q_0$. By the \emph{cons} rule, $\langle P \rangle\ c_1; c_2\ \langle Q \rangle$ is valid.

\medskip
\noindent\textbf{Case 2:} $\textsf{begin\_read}(\mathit{id})$.
 
\smallskip
\noindent\emph{Implementation step:}

Assumes $pc_{lr}(\mathit{id}) = \textsc{init}$; sets $v_L(\mathit{id}) \leftarrow L$ and $pc_{lr}(\mathit{id}) \leftarrow \textsc{mid}$.
 
\noindent\emph{Corresponding specification steps:}

$c_1 = \textsf{begin\_read}(\mathit{id})$, which sets $A \leftarrow A \cup \{\mathit{id}\}$, $MS(\mathit{id}) \leftarrow C$, $YS(\mathit{id}) \leftarrow P$, and $pc_n(\mathit{id}) \leftarrow \textsc{mid}$.
 
\noindent\emph{Precondition and postcondition:}
\begin{align*}
P &= (|C| = L + R)\\
  &\quad\;\land\; (\forall \mathit{id}' \in I.\ pc_{lr}(\mathit{id}') = \textsc{mid} \iff \mathit{id}' \in A)\\
  &\quad\;\land\; (\forall \mathit{id}' \in A.\ |MS(\mathit{id}')| \leq v_L(\mathit{id}') + R \leq |MS(\mathit{id}') \cup YS(\mathit{id}')|)\\
  &\quad\;\land\; (old_n \subseteq old_{lr})\\
  &\quad\;\land\; (\forall \mathit{id}' \in I.\ pc_n(\mathit{id}') \neq \textsc{init} \implies pc_{lr}(\mathit{id}') \neq \textsc{init} \land i_{lr}(\mathit{id}') = \mathit{i}_n(\mathit{id}'))\\
  &\quad\;\land\; (\forall \mathit{id}' \in I.\ pc_{lr}(\mathit{id}') = \textsc{done} \implies pc_n(\mathit{id}') = \textsc{done} \land o_{lr}(\mathit{id}') = \mathit{o}_n(\mathit{id}'))\\[4pt]
Q &= (|C| = L + R)\\
  &\quad\;\land\; (\forall \mathit{id}' \in I.\ pc_{lr}'(\mathit{id}') = \textsc{mid} \iff \mathit{id}' \in A \cup \{\mathit{id}\})\\
  &\quad\;\land\; (\forall \mathit{id}' \in A \cup \{\mathit{id}\}.\ |MS'(\mathit{id}')| \leq v_L'(\mathit{id}') + R \leq |MS'(\mathit{id}') \cup YS'(\mathit{id}')|)\\
  &\quad\;\land\; (old_n \subseteq old_{lr})\\
  &\quad\;\land\; (\forall \mathit{id}' \in I.\ pc_n'(\mathit{id}') \neq \textsc{init} \implies pc_{lr}'(\mathit{id}') \neq \textsc{init} \land i_{lr}'(\mathit{id}') = \mathit{i}_n'(\mathit{id}'))\\
  &\quad\;\land\; (\forall \mathit{id}' \in I.\ pc_{lr}'(\mathit{id}') = \textsc{done} \implies pc_n'(\mathit{id}') = \textsc{done} \land o_{lr}'(\mathit{id}') = \mathit{o}_n'(\mathit{id}'))
\end{align*}
where $MS'(\mathit{id}) = C$, $YS'(\mathit{id}) = P$, $v_L'(\mathit{id}) = L$, and primed variables for $\mathit{id}' \neq \mathit{id}$ are unchanged from the pre-state.
 
\noindent\emph{Proof of Lisbon triple $\langle P \rangle\ c_1\ \langle Q \rangle$.} By the \emph{atom} rule it suffices to verify each invariant is established.
\begin{itemize}
  \item \textbf{(F1)}: $L$ and $R$ are unchanged, and $C$ is unchanged, so $|C'| = |C| = L + R$.
  \item \textbf{(F2)}: $pc_{lr}(\mathit{id})$ moves from $\textsc{init}$ to $\textsc{mid}$, and $\mathit{id}$ is added to $A$. For all other $\mathit{id}' \neq \mathit{id}$, both sides are unchanged.
  \item \textbf{(F3)}: For the new entry $\mathit{id} \in A'$: $MS(\mathit{id}) = C$ and $YS(\mathit{id}) = P$, so we need $|C| \leq v_L(\mathit{id}) + R \leq |C \cup P|$. Since $v_L(\mathit{id}) \leftarrow L$ and by (F1), $|C| = L + R$, we get $v_L(\mathit{id}) + R = L + R = |C|$, so both bounds hold with equality on the left. For existing $\mathit{id}' \in A$, $MS$, $YS$, $v_L$, and $R$ are all unchanged, so (F3) is preserved.
  \item \textbf{(F4)}: Neither $\mathit{old}_n$ nor $\mathit{old}_{lr}$ changes, so $\mathit{old}_n\subseteq\mathit{old}_{lr}$ is preserved.
  \item \textbf{(F5)}: $pc_n(\mathit{id})$ moves to $\textsc{mid} \neq \textsc{init}$, requiring $pc_{lr}(\mathit{id}) \neq \textsc{init}$, which holds since $pc_{lr}(\mathit{id})$ moved to $\textsc{mid}$. Inputs are null for reads so $i_{lr}(\mathit{id}) = \mathit{i}_n(\mathit{id}) = \mathtt{null}$. All other IDs are unaffected.
  \item \textbf{F~6}: No ID reaches $\textsc{done}$ in this step, so F~6 is vacuously preserved.
\end{itemize}
Hence $\langle P \rangle\ c_1\ \langle Q \rangle$ is valid by the \emph{atom} rule.

\medskip
\noindent\textbf{Case 3:} $\textsf{return\_read}(\mathit{id})$.
 
\smallskip
\noindent\emph{Implementation step:}

Assumes $pc_{lr}(\mathit{id}) = \textsc{mid}$; sets $pc_{lr}(\mathit{id}) \leftarrow \textsc{done}$ and $o_{lr}(\mathit{id}) \leftarrow v_L(\mathit{id}) + R$.
 
\noindent\emph{Corresponding specification steps}

$c_1 = \textsf{return\_read}(\mathit{id})$, which removes $\mathit{id}$ from $A$, nondeterministically chooses $S$ with $MS(\mathit{id}) \subseteq S \subseteq MS(\mathit{id}) \cup YS(\mathit{id})$, sets $\mathit{o}_n(\mathit{id}) \leftarrow |S|$ and $pc_n(\mathit{id}) \leftarrow \textsc{done}$.
 
\noindent\emph{Precondition and postcondition:}
\begin{align*}
P &= (|C| = L + R)\\
  &\quad\;\land\; (\forall \mathit{id}' \in I.\ pc_{lr}(\mathit{id}') = \textsc{mid} \iff \mathit{id}' \in A)\\
  &\quad\;\land\; (\forall \mathit{id}' \in A.\ |MS(\mathit{id}')| \leq v_L(\mathit{id}') + R \leq |MS(\mathit{id}') \cup YS(\mathit{id}')|)\\
  &\quad\;\land\; (old_n \subseteq old_{lr})\\
  &\quad\;\land\; (\forall \mathit{id}' \in I.\ pc_n(\mathit{id}') \neq \textsc{init} \implies pc_{lr}(\mathit{id}') \neq \textsc{init} \land i_{lr}(\mathit{id}') = \mathit{i}_n(\mathit{id}'))\\
  &\quad\;\land\; (\forall \mathit{id}' \in I.\ pc_{lr}(\mathit{id}') = \textsc{done} \implies pc_n(\mathit{id}') = \textsc{done} \land o_{lr}(\mathit{id}') = \mathit{o}_n(\mathit{id}'))\\[4pt]
Q &= (|C| = L + R)\\
  &\quad\;\land\; (\forall \mathit{id}' \in I.\ pc_{lr}'(\mathit{id}') = \textsc{mid} \iff \mathit{id}' \in A \setminus \{\mathit{id}\})\\
  &\quad\;\land\; (\forall \mathit{id}' \in A \setminus \{\mathit{id}\}.\ |MS(\mathit{id}')| \leq v_L(\mathit{id}') + R \leq |MS(\mathit{id}') \cup YS(\mathit{id}')|)\\
  &\quad\;\land\; (old_n \subseteq old_{lr})\\
  &\quad\;\land\; (\forall \mathit{id}' \in I.\ pc_n'(\mathit{id}') \neq \textsc{init} \implies pc_{lr}'(\mathit{id}') \neq \textsc{init} \land i_{lr}'(\mathit{id}') = \mathit{i}_n'(\mathit{id}'))\\
  &\quad\;\land\; (\forall \mathit{id}' \in I.\ pc_{lr}'(\mathit{id}') = \textsc{done} \implies pc_n'(\mathit{id}') = \textsc{done} \land o_{lr}'(\mathit{id}') = \mathit{o}_n'(\mathit{id}'))
\end{align*}
where $pc_{lr}'(\mathit{id}) = \textsc{done}$, $o_{lr}'(\mathit{id}) = v_L(\mathit{id}) + R$, $pc_n'(\mathit{id}) = \textsc{done}$, $\mathit{o}_n'(\mathit{id}) = |S|$, and all other variables are unchanged from the pre-state.
 
\noindent\emph{Choice of $S$.} We choose $S$ such that $|S| = v_L(\mathit{id}) + R$ to match the concrete return value and satisfy F~6. By (F3) in the pre-state, $|MS(\mathit{id})| \leq v_L(\mathit{id}) + R \leq |MS(\mathit{id}) \cup YS(\mathit{id})|$, so such an $S$ exists.
 
\noindent\emph{Proof of Lisbon triple $\langle P \rangle\ c_1\ \langle Q \rangle$.} By the \emph{atom} rule we verify each invariant.
\begin{itemize}
  \item \textbf{(F1)}: $L$, $R$, and $C$ are unchanged.
  \item \textbf{(F3)}: $\mathit{id}$ is removed from $A$, so the (F3) obligation for $\mathit{id}$ is dropped. For all remaining $\mathit{id}' \in A' = A \setminus \{\mathit{id}\}$, $MS$, $YS$, $v_L$, and $R$ are unchanged, so (F3) is preserved.
  \item \textbf{(F4)}: Neither $\mathit{old}_n$ nor $\mathit{old}_{lr}$ changes, so $\mathit{old}_n\subseteq\mathit{old}_{lr}$ is preserved.
  \item \textbf{(F5)}: $pc_n(\mathit{id})$ moves to $\textsc{done} \neq \textsc{init}$, requiring $pc_{lr}(\mathit{id}) \neq \textsc{init}$, which holds since $pc_{lr}(\mathit{id}) = \textsc{mid}$ in the pre-state. All other IDs are unaffected.
  \item \textbf{F~6}: $pc_{lr}(\mathit{id})$ moves to $\textsc{done}$, requiring $pc_n(\mathit{id}) = \textsc{done}$ and $o_{lr}(\mathit{id}) = \mathit{o}_n(\mathit{id})$. The specification step sets $pc_n(\mathit{id}) \leftarrow \textsc{done}$ and $\mathit{o}_n(\mathit{id}) \leftarrow |S| = v_L(\mathit{id}) + R = o_{lr}(\mathit{id})$, so both conditions hold by our choice of $S$. All other IDs are unaffected.
\end{itemize}
Hence $\langle P \rangle\ c_1\ \langle Q \rangle$ is valid by the \emph{atom}
rule.
 
\medskip
Since the Lisbon triple $\langle R(\sigma, \tau) \rangle\ c_1;\ldots;c_n\ \langle R(\sigma', \tau') \rangle$ is valid for every implementation operation, $R$ is a valid forward simulation by Theorem \ref{thm:forwardAsLisbons}.

\clearpage

\subsection{Backward Simulation Verification}
\label{app:verifybackward}

By Theorem \ref{thm:backwardAsIncorrectnesses}, with NDC as $S_1$ and Seq as $S_2$, it suffices to show that for every implementation step $(\sigma,\sigma')$ of NDC the Incorrectness triple $\incre{R_1(\sigma)}{c}{R_1(\sigma')}$ is valid for the corresponding sequential program $c$. The NDC state is a parameter (at $\sigma$ in the precondition $P$, at $\sigma'$ in the postcondition $Q$); the sequential state is the program state. We write $c=|C|$, $p=|P|$ for the pre-state NDC counts and reason over reachable states, using:
 
\begin{lemma}[Structural invariants]\label{lem:bstruct}
    At every reachable related state: \\
    \textup{(i)} $\mathit{old}_s=\{0,\dots,x\}$, $\mathit{old}_n=\{0,\dots, |C|+|P|\}$; \\
    \textup{(ii)} $x=|C|+|\{a\in P:\mathit{pc}_s(a)=\textsc{done}\}|$; \\
    \textup{(iii)} if $\mathit{pc}_s(id)=\textsc{done}$ then $o_s(id)$ is the value of $x$ at the step where $id$ linearized.
\end{lemma}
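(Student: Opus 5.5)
The plan is to read ``reachable related state'' as a pair $(\sigma,\tau)$ that is reachable in the \emph{coupled} system. This is the product execution in which each NDC step is paired with the sequential step (or stutter) assigned to it by the action correspondence of Section~\ref{sec:bw-rel}. I would then prove all three clauses by one simultaneous induction on the length of such a coupled execution. Clause (i) needs no coupling: its two conjuncts are exactly the first two equalities of Proposition~\ref{prop:struct}, each a Hoare invariant of its own machine. Both components of a reachable coupled pair are reachable in their machines, so (i) follows by citing that proposition.

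For (ii), one step induction fails as stated, so I would first strengthen it with two auxiliary conjuncts tying increment identifiers across the machines:
\[
\forall a\in C.\ \mathit{pc}_s(a)=\textsc{done}, \qquad \forall a\in\mathcal{I}_I\setminus(P\cup C).\ \mathit{pc}_s(a)=\textsc{init}.
\]
In words, a completed NDC increment has already fired in Seq, and one not yet begun has not. With these, $x$ counts exactly the Seq increments fired, that is, all of $C$ plus those pending ones that have linearized. This is the formula in (ii).

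Initiation holds trivially, since $P=C=\emptyset$, $x=0$, and every program counter is \textsc{init}. For consecution I would case-split on the NDC action together with its coupled Seq move:
\begin{itemize}
\item \textsf{begin\_increment}$(id)$ with a Seq stutter moves $id$ into $P$ with $\mathit{pc}_s(id)=\textsc{init}$, so both sides of (ii) are unchanged.
\item \textsf{begin\_increment}$(id)$ paired with Seq \textsf{increment}$(id)$ raises $x$ by one and adds one to the count of linearized pending increments.
\item \textsf{return\_increment}$(id)$ moves $id$ from $P$ to $C$. If $id$ had already linearized, the count of linearized pending increments drops by one while $|C|$ rises by one. If not, the coupling forces the Seq increment to fire at this step, raising $x$ by one and $|C|$ by one. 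Either way the first auxiliary conjunct is re-established.
\item Reads, and Seq \textsf{read} steps at a linearization point, touch neither $x$, $C$, $P$, nor increment program counters. They preserve (ii) and both auxiliary conjuncts.
\end{itemize}

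For (iii) I would extend the induction with the claim that $o_s(id)$ is frozen once $\mathit{pc}_s(id)=\textsc{done}$. The only Seq step that writes $o_s(id)$ is \textsf{read}$(id)$. It is guarded by $\mathit{pc}_s(id)=\textsc{init}$ and sets $o_s(id)\leftarrow x$ in the very step it sets $\mathit{pc}_s(id)\leftarrow\textsc{done}$. By definition, that coupled step is the read's linearization point. No later step touches $o_s(id)$, because the guard now fails, so the equality with the value of $x$ at that point persists.

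I expect the main obstacle to be the coupling itself. Definition~\ref{def:bs} produces abstract predecessors backward, from post-states, so ``the step at which $id$ linearized'' is only meaningful relative to a chosen abstract execution. I would fix this by building the abstract execution from the end of a finite NDC execution backward, repeatedly applying the backward step condition as in Lemma~\ref{lem:reachbackward}. I would then run the above induction forward along the resulting paired execution. The point to check carefully is that every Seq predecessor handed back by the backward condition is one of the moves listed in the correspondence. If that fails for some step, I would restrict the lemma to pairs arising from executions that respect the correspondence, which is all the appendix's case analysis uses.
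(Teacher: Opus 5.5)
Your treatment of (i) and (iii) is fine. Clause (i) is Proposition~\ref{prop:struct} applied to each machine separately. Clause (iii) holds because Seq \textsf{read}$(id)$ sets $o_s(id)\leftarrow x$ and $\mathit{pc}_s(id)\leftarrow\textsc{done}$ in one step and is disabled afterwards. The paper states this lemma without a separate proof. Your case analysis for (ii) is also correct along any run that respects the correspondence.

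The gap is in where those runs come from. You generate them backward by repeatedly applying the backward step condition, as in Lemma~\ref{lem:reachbackward}. That presupposes $R_B$ is a backward simulation, but this lemma is an input to exactly that proof: Cases~3 and~4 of Appendix~\ref{app:verifybackward} invoke it. So the argument is circular. Your fallback avoids the circularity but proves less than is used. The backward obligation quantifies over every $\tau'$ with $R_B(\sigma',\tau')$, and the skip branch of Case~4 applies (ii) to such an arbitrary $\tau'$, not to one known to lie on a correspondence-respecting run. Showing that every related $\tau'$ lies on such a run is again the backward-simulation property.

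The missing idea is that (ii) needs no coupling at all. Your auxiliary conjuncts are consequences of relatedness, not inductive strengthenings.
\begin{itemize}
\item A one-step induction on NDC alone shows that, for $a\in\mathcal{I}_I$, $\mathit{pc}_n(a)=\textsc{done}$ iff $a\in C$, and $\mathit{pc}_n(a)=\textsc{init}$ iff $a\notin P\cup C$.
\item Write $D=\{a\in\mathcal{I}_I \mid \mathit{pc}_s(a)=\textsc{done}\}$. Then (B4) gives $C\subseteq D$ and (B3) gives $D\subseteq P\cup C$.
\item A one-step induction on Seq alone gives $x=|D|$, because \textsf{increment}$(a)$ is guarded by $\mathit{pc}_s(a)=\textsc{init}$ and raises $x$ once.
\item Since $P\cap C=\emptyset$, $x=|C|+|D\cap P|$, which is (ii).
\end{itemize}
This holds for every pair with $R_B(\sigma,\tau)$, $\sigma$ reachable in NDC and $\tau$ reachable in Seq, whatever Seq moves accompanied the NDC steps.

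Reachability of $\tau$ is essential. Run \textsf{begin\_increment}$(c)$, \textsf{return\_increment}$(c)$, \textsf{begin\_increment}$(p)$; NDC then has $C=\{c\}$, $P=\{p\}$, $\mathit{old}_n=\{0,1,2\}$. Take the Seq state with $x=2$, $\mathit{old}_s=\{0,1,2\}$, only $\mathit{pc}_s(c)=\textsc{done}$, and matching $i,o$ at $c$. It satisfies (B1)--(B4), yet $x=2\neq 1$. So ``reachable related state'' should be read componentwise, as above. Under that reading, your worry about which predecessors the backward condition returns disappears, and no circular appeal to the simulation is needed.
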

 
\paragraph{Case 1:} $\textsf{return\_increment}(a)$.

\emph{Implementation step:}

Assumes $\mathit{pc}_n(a)=\textsc{mid}$, $a\in P$; moves $a$ from $P$ to $C$; sets $\mathit{pc}_n(a)\leftarrow \textsc{done}$.

\emph{Corresponding specification step:}

$c=\mathtt{skip}$.

\begin{align*}
     P &= (|C|\le x\le|C|+|P|) \\
     &\quad\;\land\; (\mathit{old}_s\subseteq\mathit{old}_n) \\
     &\quad\;\land\; (\forall id.\ \mathit{pc}_s(id)\neq\textsc{init}\implies \mathit{pc}_n(id)\neq\textsc{init}\wedge i_n(id)=i_s(id)) \\
     &\quad\;\land\; (\forall id.\ \mathit{pc}_n(id)=\textsc{done}\implies \mathit{pc}_s(id)=\textsc{done}\wedge o_n(id)=o_s(id)) \\
     Q &= (|C|{+}1\le x\le|C|+|P|) \\
     &\quad\;\land\; (\mathit{old}_s\subseteq\mathit{old}_n)\\
     &\quad\;\land\; (\forall id.\ \mathit{pc}_s(id)\neq\textsc{init}\implies \mathit{pc}_n'(id)\neq\textsc{init}\wedge i_n(id)=i_s(id))\\
     &\quad\;\land\; (\forall id.\ \mathit{pc}_n'(id)=\textsc{done}\implies \mathit{pc}_s(id)=\textsc{done}\wedge o_n(id)=o_s(id)),
\end{align*}

where $\mathit{pc}_n'(a)=\textsc{done}$ and $|C'|=|C|{+}1$, $|P'|=|P|{-}1$ (so $Q$'s (B1) reads $|C|{+}1\le x\le|C|+|P|$).
 
Since $\incre{P}{skip}{P}$ is a valid Incorrectness triple due to the \emph{unit} rule, by the \emph{consequence} rule it suffices to show that $Q\implies P$. Let's inspect this clause by clause:
\begin{itemize}
    \item \textbf{(B1):} $|C|{+}1 \le x \le |C|{+}|P| \implies |C| \le x \le |C|{+}|P|$.
    \item \textbf{(B2):} $\mathit{old}_n$ is unchanged, identical.
    \item \textbf{(B3):} $\mathit{pc}_n'(a)=\textsc{done}\neq\textsc{init}$ and $\mathit{pc}_n(a)=\textsc{mid}\neq\textsc{init}$, so $Q$'s clause implies $P$'s as they are both vacuously true; other ids are unchanged.
    \item \textbf{(B4):} $\mathit{pc}_n(a)=\textsc{mid}$ makes $P$'s clause vacuous for $a$, while $Q$ constrains it, so $Q$'s clause implies $P$'s clause; other ids are unchanged.
\end{itemize}
Hence $Q\implies P$ and $\incre{P}{skip}{Q}$ is a valid Incorrectness triple.
 
\paragraph{Case 2:} $\textsf{return\_read}(a)$.

\emph{Implementation step:}

Assumes $\mathit{pc}_n(a)=\textsc{mid}$, $a\in A$; removes $a$ from $A$; chooses $S$ with $\mathit{MS}(a) \subseteq S\subseteq\mathit{MS}(a)\cup\mathit{YS}(a)$; sets $o_n(a) \leftarrow|S|$, $\mathit{pc}_n(a)\leftarrow\textsc{done}$.
\emph{Corresponding specification step:}

$c=\mathtt{skip}$.

\begin{align*}
    P &= (|C|\le x\le|C|+|P|) \\
    &\quad\;\land\; (\mathit{old}_s\subseteq\mathit{old}_n) \\ 
    &\quad\;\land\; (\forall id.\ \mathit{pc}_s(id)\neq\textsc{init}\implies \mathit{pc}_n(id)\neq\textsc{init}\wedge i_n(id')=i_s(id')) \\
    &\quad\;\land\; (\forall id.\ \mathit{pc}_n(id)=\textsc{done}\implies \mathit{pc}_s(id)=\textsc{done}\wedge o_n(id)=o_s(id)) \\
    Q &= (|C|\le x\le|C|+|P|) \\
    &\quad\;\land\; (\mathit{old}_s\subseteq\mathit{old}_n) \\
    &\quad\;\land\; (\forall id.\ \mathit{pc}_s(id)\neq\textsc{init}\implies \mathit{pc}_n'(id)\neq\textsc{init}\wedge i_n(id)=i_s(id)) \\
    &\quad\;\land\; (\forall id.\ \mathit{pc}_n'(id)=\textsc{done}\implies \mathit{pc}_s(id)=\textsc{done}\wedge o_n'(id)=o_s(id))
\end{align*}
where $\mathit{pc}_n'(a)=\textsc{done}$, $o_n'(a)=|S|$, and $|C|,|P|$ are unchanged.
 
Since $\incre{P}{skip}{P}$ is a valid Incorrectness triple due to the \emph{unit} rule, by the \emph{consequence} rule it suffices to show that $Q\implies P$. Let's inspect this clause by clause:
\begin{itemize}
    \item \textbf{(B1), (B2):} $|C|,|P|,\mathit{old}_n$ unchanged.
    \item \textbf{(B3):} $\mathit{pc}_n'(a)=\textsc{done}\neq\textsc{init}$ and $\mathit{pc}_n(a)=\textsc{mid}\neq\textsc{init}$, so $Q$'s clause implies $P$'s as they are both vacuously true; other ids are unchanged.
    \item \textbf{(B4):} $\mathit{pc}_n(id)=\textsc{mid}$ makes $P$'s clause vacuous for $a$, while $Q$ constrains it, so $Q$'s clause implies $P$'s clause; other ids are unchanged.
\end{itemize}
Hence $Q\implies P$ and $\incre{P}{skip}{Q}$ is a valid Incorrectness triple.
 
\paragraph{Case 3:} $\textsf{begin\_read}(a)$.

\emph{Implementation steps:}

Assumes $\mathit{pc}_n(a)=\textsc{init}$; adds $a$ to $A$; sets $\mathit{MS}(a)\leftarrow C$, $\mathit{YS}(a) \leftarrow P$, $\mathit{pc}_n(a)\leftarrow\textsc{mid}$.

\emph{Corresponding specification steps:}

$c=\mathtt{read}(a)\sqcup\mathtt{skip}$.

\begin{align*}
    P &= (|C|\le x\le|C|+|P|) \\
    &\quad\;\land\; (\mathit{old}_s\subseteq\mathit{old}_n) \\
    &\quad\;\land\; (\forall id.\ \mathit{pc}_s(id)\neq\textsc{init}\implies \mathit{pc}_n(id)\neq\textsc{init}\wedge i_n(id)=i_s(id))\\
    &\quad\;\land\; (\forall id.\ \mathit{pc}_n(id)=\textsc{done}\implies \mathit{pc}_s(id)=\textsc{done}\wedge o_n(id)=o_s(id)),\\
    Q &= (|C|\le x'\le|C|+|P|) \\
    &\quad\;\land\; (\mathit{old}_s'\subseteq\mathit{old}_n) \\
    &\quad\;\land\; (\forall id.\ \mathit{pc}_s'(id)\neq\textsc{init}\Rightarrow \mathit{pc}_n'(id)\neq\textsc{init}\wedge i_n(id)=i_s(id))\\
    &\quad\;\land\; (\forall id.\ \mathit{pc}_n'(id)=\textsc{done}\Rightarrow \mathit{pc}_s'(id)=\textsc{done}\wedge o_n(id)=o_s'(id)),
\end{align*}
where $\mathit{pc}_n'(a)=\textsc{mid}$ and $|C|,|P|$ are unchanged. Split $Q=Q_{\mathtt{s}}\vee Q_{\mathtt{r}}$ on $\mathit{pc}_s'(a)$, with $Q_{\mathtt{s}}=Q\wedge\mathit{pc}_s'(a)=\textsc{init}$ and $Q_{\mathtt{r}} =Q\wedge\mathit{pc}_s'(a)=\textsc{done}$.
 
Since $\incre{P}{skip}{P}$ is a valid Incorrectness triple due to the \emph{unit} rule, by the \emph{consequence} rule it suffices to show that $Q_{\mathtt{s}} \implies P$ for the skip choice: with $\mathit{pc}_s'(a) = \mathit{pc}_s(a) = \textsc{init}$, $P$'s (B3) is vacuous for $a$, and $\mathit{pc}_n(a)=\textsc{init}$ makes $P$'s (B4) vacuous for $a$; $|C|,|P|,x,old_s$ are unchanged, so the remaining clauses are identical. Hence, $Q_{\mathtt{s}} \implies P$ and $\incre{P}{skip}{Q_{\mathtt{s}}}$ is a valid Incorrectness triple.
 
$\mathtt{read}(a)$ assumes $\mathit{pc}_s(a) = \textsc{init}$ and sets $o_s(a)\leftarrow x$, $\mathit{pc}_s'(a) \leftarrow \textsc{done}$, so
\[
 Q'=\{\tau'\mid \mathit{pc}_s'(a)= \textsc{done},\ o_s'(a)=x',\ \tau'[\mathit{pc}_s(a){\leftarrow}\textsc{init}] \in P\,\}.
\]
where $\incre{P}{read(a)}{Q'}$ is a valid Incorrectness triple due to the \emph{assume} and \emph{assigment} rules. For $\tau'\in Q_{\mathtt{r}}$, $a$ began at this step, so by Lemma~\ref{lem:bstruct}(iii) it linearized here and $o_s'(a)=x'$; and the pre-image $\tau'[\mathit{pc}_s(a){\leftarrow}\textsc{init},\,o_s(a)\, \text{unset}]$ satisfies $P$ ($x, old_s$ unchanged; (B3) vacuous for $a$; (B4) vacuous for $a$ at $\sigma$). Hence $Q_{\mathtt{r}} \implies Q'$, and by consequence $\incre{P}{read(a)}{Q_{\mathtt{r}}}$ is a valid Incorrectness triple.
 
As a result of the \emph{choice} rule, $\incre{P}{\mathtt{read}(id)\sqcup
\mathtt{skip}}{Q_{\mathtt{r}}\vee Q_{\mathtt{s}}}=\incre{P}{c}{Q}$ is a valid Incorrectness triple.
 
\paragraph{Case 4:} $\textsf{begin\_increment}(a)$.

\emph{Implementation step:}

Assumes $\mathit{pc}_n(a)=\textsc{init}$, $a\notin P\cup C$; adds $a$ to $P$; inserts $|C|+|P|=c+p+1$ into $\mathit{old}_n$; adds $a$ to $\mathit{YS}(t)$ for all $t\in A$; sets $\mathit{pc}_n(a)\leftarrow\textsc{mid}$.

\emph{Corresponding specification steps:}

$c=(\textsf{increment}(a)\sqcup\mathtt{skip}) ;\ \mathtt{flush}$, where $\mathtt{flush}=(\mathtt{read}(t) \sqcup\mathtt{skip});_{t\in A}$.

\begin{align*}
    P &=(c\le x\le c+p) \\
    &\quad\;\land\; (\mathit{old}_s\subseteq\mathit{old}_n) \\
    &\quad\;\land\; (\forall id.\ \mathit{pc}_s(id)\neq\textsc{init}\implies \mathit{pc}_n(id)\neq\textsc{init}\wedge i_n(id)=i_s(id))\\
    &\quad\;\land\; (\forall id.\ \mathit{pc}_n(id)=\textsc{done}\implies \mathit{pc}_s(id)=\textsc{done}\wedge o_n(id)=o_s(id)),\\
    Q &= (c\le x'\le c+p+1) \\
    &\quad\;\land\; (\mathit{old}_s'\subseteq\mathit{old}_n') \\
    &\quad\;\land\; (\forall id.\ \mathit{pc}_s'(id)\neq\textsc{init}\implies \mathit{pc}_n'(id)\neq\textsc{init}\wedge i_n(id)=i_s'(id)) \\
    &\quad\;\land\; (\forall id.\ \mathit{pc}_n'(id)=\textsc{done}\implies
      \mathit{pc}_s'(id)=\textsc{done}\wedge o_n(id)=o_s'(id)),
\end{align*}
where $\mathit{pc}_n'(a)=\textsc{mid}$, $|C'|=c$, $|P'|=p+1$, and $\mathit{old}_n'=\mathit{old}_n\cup\{c+p+1\}$. Split $Q=Q_{\mathtt{s}}\vee Q_{\mathtt{i}}$ on $\mathit{pc}_s'(a)$.
 
\emph{Skip branch $\;(Q_{\mathtt{s}}=Q\wedge\mathit{pc}_s'(a)=\textsc{init})$.} Here $\mathtt{skip};\mathtt{flush}$ reduces to $\mathtt{skip}$: with $x$ unchanged, no read reaches a new target, so every $\mathtt{flush}$ choice is $\mathtt{skip}$. We show $Q_{\mathtt{s}}\subseteq P$. By Lemma~\ref{lem:bstruct}(ii) at $\sigma'$, $a$ unlinearized gives $x'=c+ |\{b\in P:\mathit{pc}_s'(b)=\textsc{done}\}|\le c+p$, so (B1) of $P$ holds; (B2) follows from $\mathit{old}_s'=\{0,\dots,x'\}\subseteq\{0,\dots,c+p\}= \mathit{old}_n$; B3, B4 for $a$ are vacuous at $\sigma$; the rest are inherited. By the \emph{consequence} rule, $\incre{P}{skip}{Q_{\mathtt{s}}}$ is a valid Incorrectness triple.
 
\emph{Increment branch $\;(Q_{\mathtt{i}}=Q\wedge\mathit{pc}_s'(a)=
\textsc{done})$.} We derive $[P]\,\textsf{increment}(a);\mathtt{flush}\,
[Q_{\mathtt{i}}]$ by the seq rule through the intermediate

\begin{align*}
    M &=(c{+}1\le x\le c+p+1) \\
    &\quad\;\land\; (\mathit{old}_s\subseteq\mathit{old}_n') \\
    &\quad\;\land\; (\mathit{pc}_s(a)=\textsc{done})\\
    &\quad\;\land\; (\forall id.\ \mathit{pc}_s(id)\neq\textsc{init}\implies \mathit{pc}_n'(id)\neq\textsc{init}\wedge i_n(id)=i_s(id))\\
    &\quad\;\land\; (\forall id.\ \mathit{pc}_n'(id)=\textsc{done}\implies \mathit{pc}_s(id)=\textsc{done}\wedge o_n(id)=o_s(id)),
\end{align*}
i.e. $R_1(\sigma')$ after $a$ has linearized but before the reads, read with $\sigma'$ parameters (so (B3) for $a$ holds, as $\mathit{pc}_n'(a)=\textsc{mid}\neq\textsc{init}$).
 
$\incre{P}{\textsf{increment}(a)}{M}$: $\textsf{increment}(a)$ sends $x$ to $x{+}1$ and $\mathit{old}_s$ to $\mathit{old}_s\cup\{x{+}1\}$ and $\mathit{pc}_s(a)$ to $\textsc{done}$. For $\tau_M\in M$, its pre-image has $x{=}x_M{-}1\in[c,c+p]$ (so (B1) of $P$ holds) and $\mathit{pc}_s(a)= \textsc{init}$ ((B3), (B4) vacuous for $a$ at $\sigma$); thus the triple is a valid Incorrectness triple.
 
$[M]\,\mathtt{flush}\,[Q_{\mathtt{i}}]$: $\mathtt{flush}$ is the sequence $(\mathtt{read}(t)\sqcup\mathtt{skip});_{t\in A}$. Applying the seq and choice rules once per $t\in A$ exactly as in the read branch of Case~3 -- taking the $\mathtt{read}(t)$ alternative for the active reads with $o_s'(t)=x'$ (these linearize at the new count $x'$, by Lemma~\ref{lem:bstruct}(iii), since $M$ has $x=x'$) and $\mathtt{skip}$ for the rest -- yields $\incre{M}{\mathtt{flush}}{Q_{\mathtt{i}}}$. The \emph{sequencing} rule then gives $\incre{P}{\textsf{increment}(a);\mathtt{flush}}{Q_{\mathtt{i}}}$ as a valid Incorrectness triple.
 
Distributing the leading choice, $c=(\textsf{increment}(a);\mathtt{flush})\sqcup(\mathtt{skip};\mathtt{flush})$, the \emph{choice} rule combines the two branches into $\incre{P}{c}{Q_{\mathtt{i}} \vee Q_{\mathtt{s}}}=\incre{P}{c}{Q}[P]$, which is a valid Incorrectness triple.
 
\medskip\noindent
Since the corresponding Incorrectness triple is valid for every implementation operation, $R$ is a backward simulation by Theorem~\ref{thm:backwardAsIncorrectnesses}.